\newtheorem{thm}{Theorem}
\newtheorem{cor}{Corollary}
\newtheorem{prop}{Proposition}
\newtheorem{defn}{Definition}
\newtheorem{rem}{Remark}
\newcommand{\cv}{{\bf c}}
\newcommand{\qv}{{\bf q}}
\newcommand{\sv}{{\bf s}}
\newcommand{\uv}{{\bf u}}
\newcommand{\xv}{{\bf x}}
\newcommand{\yv}{{\bf y}}
\newcommand{\Um}{{\bf U}}
\newcommand{\Zm}{{\bf Z}}
\newcommand{\Cc}{{\cal C}}
\newcommand{\Pc}{{\cal P}}
\newcommand{\Qc}{{\cal Q}}
\newcommand{\Rc}{{\cal R}}
\newcommand{\Xc}{{\cal X}}
\newcommand{\Yc}{{\cal Y}}
\begin{document}

\title{The Two-Way Wiretap Channel: Achievable Regions and Experimental Results}

\author{\IEEEauthorblockN{Aly El Gamal, O. Ozan Koyluoglu,
Moustafa Youssef, and Hesham El Gamal}
\thanks{Aly El Gamal was with the Wireless Intelligent Networks Center (WINC), Nile University, Cairo, Egypt.
He is now with the University of Illinois at Urbana-Champaign
(Email: elgamal1@uiuc.edu).
O. Ozan Koyluoglu was with the Department
of Electrical and Computer Engineering, The Ohio State University, Columbus, OH.
He is now with the University of Texas at Austin (Email: ozan@austin.utexas.edu).
Hesham El Gamal is with the Department
of Electrical and Computer Engineering, The Ohio State University,
Columbus, OH (Email: helgamal@ece.osu.edu).
Moustafa Youssef was with the Wireless Intelligent
Networks Center (WINC), Nile University, Cairo, Egypt.
He is now with Alexandria University and Egypt-Japan University of Science and Technology (E-JUST)
(Email: moustafa.youssef@ejust.edu.eg).}
\thanks{This work was presented in part at the 2009 IEEE Global
Communications Conference (GLOBECOM 2009)
and the 2010 IEEE Information Theory Workshop (ITW 2010).}
\thanks{This research was supported in part by the National
Science Foundation (NSF), the Los Alamos National Labs (LANL), the USAID Fund, and QNRF.}
}

\maketitle

\begin{abstract}
This work considers the two-way wiretap
channel in which two legitimate users, Alice and Bob, wish to exchange messages securely
in the presence of a passive eavesdropper Eve. In the full-duplex scenario, where each node can transmit and receive simultaneously, we obtain new achievable secrecy rate regions based on the idea of allowing the two users to \emph{jointly} optimize their channel prefixing
distributions and binning codebooks in addition to key sharing. The new regions are shown to be strictly larger than the known ones for a wide class of discrete memoryless and Gaussian channels. In the half-duplex case, where a user can only transmit or receive on any given degree of freedom, we introduce the idea of {\em randomized scheduling} and establish the significant gain it offers in terms of the achievable secrecy sum-rate. We further develop an experimental setup based on a IEEE 802.15.4-enabled sensor boards, and use this testbed to show that one can exploit the two-way nature of the communication, via appropriately randomizing the transmit power levels and transmission schedule, to introduce significant ambiguity at {\bf a noiseless} Eve. 
\end{abstract}

\section{Introduction}
In a pioneering paper~\cite{Shannon:BSTJ:49}, Shannon
established the achievability of perfectly secure communication
in the presence of an eavesdropper with unbounded computational
complexity. However, the necessary condition for perfect secrecy,
i.e., that the entropy of the private key is at least as large as that of
the message, appears to be prohibitive for most practical applications.
In~\cite{wyner}, Wyner revisited the problem and proved the
achievability of a positive secrecy rate over a degraded discrete
memoryless channel, via a {\em key-less} secrecy approach, by
relaxing the \emph{noiseless} assumption and the strict notion
of perfect secrecy employed in~\cite{Shannon:BSTJ:49}. Wyner's
results were later extended to the Gaussian and broadcast channels
in~\cite{GWT}~and~\cite{BCC}, respectively. In~\cite{public-discussion},
Maurer showed how to exploit the presence of a \emph{public discussion}
channel to achieve positive secrecy over the one way wiretap channel
even when the eavesdropper channel is less noisy than the legitimate one.
In~\cite{Lai2008}, the authors considered a more practical
feedback scenario where the noiseless public
channel is replaced by {\em receiver feedback} over
the same noisy channel. Under this assumption, it was shown
that the perfect secrecy capacity is equal to the capacity of
the main channel in the absence of the eavesdropper for
full-duplex modulo-additive discrete memoryless channels.
More interestingly,~\cite{Lai2008} established the achievability
of positive secrecy rates, even under the half-duplex constraint
where each feedback symbol introduces an erasure event in the main channel.

Our work generalizes this line of work by investigating the fundamental limits of the two-way wiretap channel, where Alice and Bob wish
to exchange secure messages in the presence of a passive eavesdropper Eve.
It is easy to see that the one way channel with feedback considered
in~\cite{Lai2008} is a special case of this model. Using the cooperative
channel prefixing and binning technique proposed in~\cite{Koyluoglu:Onthe08,Koyluoglu:Cooperative}, along with an innovative approach for key sharing between Alice and Bob,
we first derive an inner bound on the secrecy capacity region of the full-duplex discrete
memoryless two-way wiretap channel. By specializing our results to the additive modulo-$2$ and Gaussian channel, our region is shown to be strictly larger than those reported
recently in the literature~\cite{TW-E,GGMAC-WT,HeAndYener}. The gain can be attributed to the fact that we allow both nodes to simultaneously send secure messages when the channel conditions are favorable. We then proceed to the half-duplex setting where each node can only transmit or receive on the same degree of freedom. Here, we introduce the concept of {\em randomized scheduling for secrecy}, whereby Alice and Bob send their symbols
at random time instants to maximally confuse Eve at the expense of introducing {\bf collisions and erasure events} in the main channel. Remarkably, this approach is shown to result in significant gains in the achievable secure sum rate, as compared with the traditional deterministic scheduling approach. In the Gaussian scenario, we show that the ambiguity at Eve can be further enhanced by randomizing the transmit power levels.

Inspired by our information theoretic foundation, we develop an IEEE $802.15.4$ testbed to estimate the ambiguity at the eavesdropper in near field wireless sensor networks where the distance between the legitimate nodes is significantly smaller than that to the potential eavesdropper. A representative scenario corresponds to Body Area Networks (BAN) which are being considered for a variety of health care applications. Here, the sensor nodes are mounted on the body, and hence, any potential eavesdropper is expected to be at a significantly larger distance from each legitimate node. Clearly, ensuring the confidentiality of the messages exchanged between sensors is an important design consideration in this application. Assuming an eavesdropper equipped with an energy classifier, analytical and experimental results that quantify the achievable secrecy sum rate under a two dimensional path loss model are derived. However, it is worth noting that we do not address the issue of implementing the classical wiretap code~\cite{wyner} in this work. Overall, these results establish the gain offered by the two-way randomization concept and establish the feasibility of our approach in realistic scenarios.

It is worth noting that similar settings to the one considered in this work, exist in the literature. In particular, the authors in~\cite{Morsy-ICC11} consider a binary erasure block-fading channel where the nodes are placed according to a similar geometric model to that in Section~\ref{experimental}, and provide analytical and experimental results for the secrecy outage probabilities for frames of different sizes,~\cite{HeAndYener2} considers an extension of the two-way wiretap channel where the untrusted eavesdropper may be used to relay messages between the two users. Also,~\cite{Pierrot-Bloch} considers the two-way wiretap channel with a {\em strong secrecy} constraint, where the mutual information leakage to the eavesdropper, rather than the leakage rate (defined in Section~\ref{full_duplex}) is required to vanish in the limit of the number of channel uses.

The rest of the paper is organized as follows. In Section~\ref{full_duplex},
we develop an achievable secrecy rate region for the full-duplex discrete
memoryless two-way wiretap channel, and specialize the result to the
additive modulo-$2$ and Gaussian channel. Section~\ref{half_duplex} is devoted to the half-duplex scenario where the concept of randomized scheduling is introduced. Our practical setting, using the TinyOS-enabled sensor boards, is described in Section~\ref{experimental}. The analytical and experimental results of this section establish the feasibility of our approach in near field wireless sensor network applications. Finally,
we offer some concluding remarks in Section~\ref{conclusion}. To enhance the flow of the paper, the detailed proofs are collected in the appendices.

\section{Full-Duplex Channels}\label{full_duplex}
In the full-duplex scenario, each of the two legitimate terminals
is equipped with a transmitter and a receiver that can operate simultaneously on the same degree of freedom. The two users intend
to \emph{exchange} messages in the presence of a (passive)
eavesdropper. More specifically, the $i^{th}$ user wishes to transmit a secret message $w_i$,
selected from a set of \emph{equiprobable} messages ${\cal M}_i=\{1,\ldots,M_i\}$, to the other
user, in $n$ channel uses, where $i=1,2$. For message $w_i$, a codeword
${\bf X}_i(w_i)=\{X_i(1),\ldots,X_i(n)\}$ is transmitted at a rate
$R_i=\frac{1}{n}\log_2 M_i$. The $i^{th}$ decoder employs a
decoding function $\phi_{i}(.)$ to map the received sequence
${\bf Y}_{i}$ to an estimate $\hat{w}_i$ of $w_i$.
The two-way communication is governed by
\emph{reliability} and \emph{secrecy} constraints. The former is
measured by the average probability of error,
\begin{equation*}
P_{e,i}=\frac{1}{M_i}\sum\limits_{w_i \in {\cal M}_i}
P\{\hat{w}_i\neq w_i|w_i \textrm{ is sent}\},
\textrm{ for } i=1,2;
\end{equation*}
whereas the latter is quantified by the mutual information leakage rate
to the eavesdropper $L$, i.e.,
\begin{equation*}
L_n=\frac{1}{n}I(W_1,W_2;{\bf Z}),
\end{equation*}
where ${\bf Z}=\{Z(1),\ldots,Z(n)\}$ is the observed sequence
at the eavesdropper. Here, we focus on the {\em perfect secrecy}
rate region, where the leakage rate is made arbitrarily
small~\cite{wyner}, as formalized in the following.

\begin{defn}\label{eqn:achievable}
The secret rate tuple $(R_1,R_2)$ is achievable for the
two-way wiretap channel, if for any given $\epsilon>0$,
there exists an $(n,M_1,M_2,P_{e,1},P_{e,2},L_n)$ code such that,
\begin{eqnarray*}
R_1&=&\frac{1}{n}\log_2 M_1\\
R_2&=&\frac{1}{n}\log_2 M_2\\
\max(P_{e,1},P_{e,2})&\leq& \epsilon\\
L_n &\leq& \epsilon,
\end{eqnarray*}
for sufficiently large $n$.
\end{defn}
 
We note that the last condition implies that (see, e.g.,~\cite[Lemma 15]{Koyluoglu:Cooperative})
\begin{eqnarray*}
\frac{1}{n}H(W_i|{\bf Z})\geq R_i-\epsilon \textrm{ for } i=1,2.
\end{eqnarray*}

The secrecy capacity region is defined as the set of all achievable
secret rate tuples $(R_1,R_2)$ and is denoted by $\Cc^F$.
Throughout the sequel, we will use the following shorthand notation for probability
distributions: $P(x)\triangleq P(X=x)$, $P(x|y)\triangleq P(X=x|Y=y)$,
and $P(x,y)\triangleq P(X=x,Y=y)$, where $X$ and $Y$ denote arbitrary
random variables. We will also use $\log(x)$ to denote $\log_2(x)$, and
$[a]^{+}$ to denote $\max(a,0)$. Furthermore, for the full-duplex \emph{discrete memoryless two-way channel with an
external passive eavesdropper} (DM-TWC-E), we will use the calligraphic
letters ${\cal X}_1$ and ${\cal X}_2$ to denote the discrete input
finite alphabets for user $1$ and user $2$, respectively, and ${\cal Y}_1$,
${\cal Y}_2$, and ${\cal Z}$, to denote the output alphabets observed
at the decoders of user $1$, user $2$, and the eavesdropper, respectively.
The channel is given by $P(y_1,y_2,z|x_1,x_2)$ and is memoryless
in the following sense.
$$P(y_1(t),y_2(t),z(t)|{\bf x}_1^t,{\bf x}_2^t,
{\bf y}_1^{t-1},{\bf y}_2^{t-1},{\bf z}^{t-1})
= P(y_1(t),y_2(t),z(t)|x_1(t),x_2(t)).$$

We further assume all channel state information to be available at all nodes. Our general achievable region is obtained via a coding scheme inspired 
by~\cite{Koyluoglu:Cooperative} where the codewords
${\bf C}_1$ and ${\bf C}_2$ are drawn from the two binning codebooks, and
passed on to the two respective prefix channels. To maximize the
ambiguity at Eve, both the binning
codebooks and channel prefixing distributions are jointly optimized.
In addition, the proposed scheme involves key sharing with a
block encoding technique to facilitate the secrecy generation.
In particular, the key received from the other user during the previous
block is used in a one time pad scheme~\cite{vernam} to transmit
additional secret bits. The codeword consisting of the XOR
of the message and the key serves a) as a cloud center in the
superposition coding and b) as an additional randomization
for the binning codebook.
The following result characterizes the
set of achievable rates using our coding scheme.

\begin{thm}\label{thm:FullDuplexDMC-R}
The proposed coding scheme achieves the region $\Rc$ for
the full-duplex DM-TWC-E.
\begin{equation*}
{\cal R} \triangleq \textrm{ closure of }
\left\{\underset{p\in{\cal P}}{\bigcup} {\cal R}(p)\right\} \subseteq \Cc^F,
\end{equation*}
where ${\cal P}$ denotes the set of all joint distributions
of the random variables $Q$, $U_1$, $U_2$, $C_1$, $C_2$,
$X_1$, and $X_2$ satisfying
\begin{equation*}
P(q,u_1,u_2,c_1,c_2,x_1,x_2)=P(q)P(u_1|q)P(c_1|u_1)P(x_1|c_1)P(u_2|q)P(c_2|u_2)P(x_2|c_2)
\end{equation*}
and ${\cal R}(p)$ is the closure of all
rate pairs $(R_1=R_1^u+R_1^s+R_1^o,R_2=R_2^u+R_2^s+R_2^o)$,
with non-negative tuples
$(R_1^u,R_1^s,R_1^o,R_1^x,R_2^u,R_2^s,R_2^o,R_2^x)$
satisfying
\begin{eqnarray}
R_1^s+R_1^k+R_1^o+R_1^x &\leq& I(C_1;Y_2|X_2,U_1,Q) \label{eq:a1}\\
R_1^u+R_1^s+R_1^k+R_1^o+R_1^x &\leq& I(U_1,C_1;Y_2|X_2,Q) \label{eq:a2}\\
R_2^s+R_2^k+R_2^o+R_2^x &\leq& I(C_2;Y_1|X_1,U_2,Q) \label{eq:a3}\\
R_2^u+R_2^s+R_2^k+R_2^o+R_2^x &\leq& I(U_2,C_2;Y_1|X_1,Q) \label{eq:a4}\\
R_1^o+R_1^x &\leq& I(C_1;Z|U_1,U_2,C_2,Q) \label{eq:a5}\\
R_2^o+R_2^x &\leq& I(C_2;Z|U_1,U_2,C_1,Q) \label{eq:a6}\\
R_1^o+R_1^x + R_2^o+R_2^x &=& I(C_1,C_2;Z|U_1,U_2,Q) \label{eq:a7}\\
R_1^u+R_1^o &\leq& R_2^k \label{eq:a8}\\
R_2^u+R_2^o &\leq& R_1^k  \label{eq:a9}
\end{eqnarray}
\end{thm}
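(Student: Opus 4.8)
The plan is to prove Theorem~\ref{thm:FullDuplexDMC-R} by a random-coding achievability argument built on a \emph{block-Markov} construction that superimposes the cooperative channel-prefixing-and-binning scheme of~\cite{Koyluoglu:Cooperative} with a one-time-pad key-reuse layer. Fix $p\in\Pc$ and a large block length $n$; transmission takes place over $B$ consecutive blocks, and the target per-block rates will be the ones in the theorem minus an $O(1/B)$ overhead that vanishes as $B\to\infty$. For each user $i$ I would generate, on a common time-sharing sequence $Q^n$ drawn i.i.d. from $P(q)$: a cloud-center codebook with sequences $U_i^n$ drawn i.i.d. from $P(u_i\mid q)$; for each cloud center a satellite codebook with sequences $C_i^n$ drawn i.i.d. from $P(c_i\mid u_i)$; and finally pass the selected $C_i^n$ through the memoryless prefix kernel $P(x_i\mid c_i)$ to form $X_i^n$. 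The codebook indices are split so that the padded-message portions (rates $R_i^u$ on the cloud layer and $R_i^o$ on the satellite layer), the wiretap-binned secret bits ($R_i^s$), the key bits destined for the partner ($R_i^k$), and the pure randomization bits ($R_i^x$) are allocated across the two layers. In block $b$, user $i$ XORs its fresh padded-message bits with the key it recovered from the partner's block-$(b-1)$ transmission and uses the results as the codebook indices; it draws the $R_i^k$ fresh key bits and the $R_i^x$ dummy bits uniformly at random. Constraints~\eqref{eq:a8}--\eqref{eq:a9} are exactly the requirement that the partner's delivered key be long enough to pad $R_i^u+R_i^o$, while the padded satellite string additionally doubles as randomization for the binning codebook (which is why $R_i^o$ recurs in the secrecy constraints below).

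For reliability, note that legitimate receiver $i'\neq i$ knows its own input $X_{i'}^n$, so the effective channel from user $i$ is the point-to-point channel $P(y_{i'}\mid x_i,x_{i'})$, over which $i'$ performs joint-typicality decoding of $(U_i^n,C_i^n)$ given $(X_{i'}^n,Q^n)$. A standard superposition-coding/packing-lemma analysis yields the decodability conditions~\eqref{eq:a1}--\eqref{eq:a4}; since there are only finitely many blocks, a union bound controls error propagation of the recovered keys, and $\max(P_{e,1},P_{e,2})$ is driven below any $\epsilon$ as $n\to\infty$.

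The heart of the argument, and the step I expect to be the main obstacle, is the equivocation analysis: one must show $\tfrac1{nB}I(W_1,W_2;Z^{nB})\le\epsilon$, equivalently (via~\cite[Lemma~15]{Koyluoglu:Cooperative}) that $(W_1,W_2)$ retains essentially full conditional entropy given Eve's entire observation. I would isolate the contributions of the padded-message bits and of the binned bits $(w_1^s,w_2^s)$. Conditioning on the cloud centers and time-sharing sequence, the two satellite codebooks present to Eve a two-transmitter wiretap multiple-access structure with inputs $(C_1,C_2)$; the cooperative double-binning equivocation computation of~\cite{Koyluoglu:Cooperative}, suitably adapted, reduces the secrecy requirement for the binned bits to the polymatroid-type conditions~\eqref{eq:a5}--\eqref{eq:a7}, with the sum constraint~\eqref{eq:a7} matching $I(C_1,C_2;Z\mid U_1,U_2,Q)$ exactly and each individual rate bounded by its conditional mutual information. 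For the padded bits, the pad used in block $b$ is itself a secret index of the partner's block-$(b-1)$ wiretap code, so — applying the same equivocation lemma one block earlier — it is asymptotically uniform and asymptotically independent of everything Eve has observed, hence an essentially ideal one-time pad. The delicate point is stitching these per-block statements together across the block-Markov dependence: the key used in block $b$ was carried, in coded form, over a channel Eve observed in block $b-1$, so one must verify by a careful induction over the $B$ blocks — conditioning appropriately on Eve's cross-block observations — that within-block secrecy is not eroded and that the residual leakage telescopes to at most $\epsilon$.

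Collecting the reliability constraints~\eqref{eq:a1}--\eqref{eq:a4}, the key constraints~\eqref{eq:a8}--\eqref{eq:a9}, and the secrecy constraints~\eqref{eq:a5}--\eqref{eq:a7}, and letting $\epsilon\to0$ and $n,B\to\infty$, shows that every $(R_1,R_2)$ admitting a non-negative rate split satisfying these inequalities is achievable; taking the union over $p\in\Pc$ and then the closure yields $\Rc\subseteq\Cc^F$.
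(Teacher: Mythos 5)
Your proposal is correct and follows essentially the same route as the paper's Appendix~A proof: the same superposition/prefixing/binning codebook with bin index $(w_i^s,w_i^k)$ and in-bin index $(w_i^o,w_i^x)$, the same block-Markov one-time-pad reuse of the previous block's key giving \eqref{eq:a8}--\eqref{eq:a9}, the same packing argument for \eqref{eq:a1}--\eqref{eq:a4}, and the same conditional (on $U_1,U_2,Q$) equivocation computation via Fano's inequality yielding \eqref{eq:a5}--\eqref{eq:a7}. The only difference is one of emphasis: you flag the cross-block induction for the leakage as a delicate step, whereas the paper dispatches it by noting that messages in different blocks are independent and invoking Shannon's one-time-pad argument per block.
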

\begin{IEEEproof}
Please refer to Appendix~\ref{prf:FullDuplexDMC-R}.
\end{IEEEproof}

For $i=1,2$, $R_i^s$ denotes the rate of {\em physically} secure transmission for user $i$. i.e., the part of message $W_i$ that is secured using cooperative binning and channel prefixing only, $R_i^k$ denotes the rate of key transmission from user $i$ to the other user, $R_i^o$ denotes the rate of transmission of the open part of message $W_i$ that is secured using the secret key received from the other user in the previous block. The classical wiretap code~\cite{wyner} requires sacrificing part of the rate available for reliable communication, to exploit the secrecy advantage offered by the physical channel (in our case, the equivalent channel after inserting the channel prefix) in order to hide the message from the eavesdropper. The aforementioned part equals $R_i^o+R_i^x$ for user $i$. Note that the eavesdropper may be able to decode this part of message $W_i$, including the open part, but that will not violate the secrecy condition since this part is secured by the secret key received from the other user. The possibility of using a superposition code~\cite{nitbook} to transmit the physically secured message is allowed, where all nodes - including the eavesdropper - can identify the position of the cloud center, however, the part of the message conveyed through the cloud center is secured through the secret key received from the other user in the previous block, and in this case the rate of transmission for this part is given by $R_i^u$. The random variables $Q$ and $U$ denote the time sharing random variable and the cloud center of the superposition code, respectively.

Inequalities~\eqref{eq:a1}-~\eqref{eq:a4} follow from the reliable communication constraint,  and the conditions in~\eqref{eq:a5}-~\eqref{eq:a7} ensure that enough randomization is inserted through the wiretap code into the multiple access channel from the two legitimate nodes to the eavesdropper, such that the secrecy constraint is satisfied. Finally, the conditions in~\eqref{eq:a8}-~\eqref{eq:a9} follow from the fact that the entropy of the part of the message that is secured using the secret key received from the other user is bounded by the entropy of that key~\cite{Shannon:BSTJ:49}. Note that the role of key sharing evident from the above inequalities, is not to increase the sum rate, but to give complete freedom in distributing the the secrecy advantage offered by the two-way wiretap channel (after inserting the channel prefixes) between the two users.

\begin{rem}
The proposed coding scheme can be used to exchange
open messages (secured using the secret key) in addition to the physically secure ones between Alice and Bob, even through the cloud center of the superposition code.
More Specifically, the rate $R_i^u$ can be split into an open part $R_i^{uo}$ and a physically secured part $R_i^{us}$. Let $R_i^{\textrm{secret}}$ and
$R_i^{\textrm{open}}$ be the secret and open message rates
of transmitter $i=1,2$. Then, the proposed scheme readily achieves
the four-dimensional rate region given by the closure of
the union (over all input probability distributions) of the set of
rate tuples
$$
(R_1^{\textrm{secret}}=R_1^s+R_1^o+R_1^{us}, R_1^{\textrm{open}}=R_1^x+R_1^{uo},
R_2^{\textrm{secret}}=R_2^s+R_2^o+R_2^{us}, R_2^{\textrm{open}}=R_2^x+R_2^{uo}),
$$
with the non-negative rate tuples
$(R_1^{us},R_1^{uo},R_1^s,R_1^o,R_1^x,R_2^{us},R_2^{uo},R_2^s,R_2^o,R_2^x)$
satisfying \eqref{eq:a1}-\eqref{eq:a7} 
with $R_1^u = R_1^{us}+R_1^{uo}$, $R_2^u = R_2^{us}+R_2^{uo}$ and
$R_1^{us}+R_1^o \leq R_2^k$, 
$R_2^{us}+R_2^o \leq R_1^k$.
\end{rem}

One can immediately see that the region $\Rc$ does not lend itself to
simple computational approaches. Therefore, the rest of the section
will focus {\em primarily} on the following sub-region
$\Rc^F$.

\begin{thm}\label{thm:fullduplexdmc}
For the full-duplex DM-TWC-E,
\begin{equation*}
{\cal R}^F \triangleq \textrm{ closure of }
\left\{\underset{p\in{\cal P}^F}{\bigcup} {\cal R}^F(p)\right\}
\subseteq \Rc \subseteq \Cc^F,
\end{equation*}
where ${\cal P}^F$ denotes the set of all joint distributions
of the random variables $Q$, $C_1$, $C_2$, $X_1$, and $X_2$ satisfying
\begin{equation*}
P(q,c_1,c_2,x_1,x_2)=P(q)P(c_1|q)P(c_2|q)P(x_1|c_1)P(x_2|c_2)
\end{equation*}
and ${\cal R}^F(p)$ is the closure of all non-negative
rate tuples $(R_1,R_2)$ satisfying
\begin{eqnarray*}
R_1&\leq& I(C_1;Y_2|X_2,Q)\\
R_2&\leq& I(C_2;Y_1|X_1,Q)\\
R_1+R_2&\leq& I(C_1;Y_2|X_2,Q)+I(C_2;Y_1|X_1,Q)-I(C_1,C_2;Z|Q).
\end{eqnarray*}
\end{thm}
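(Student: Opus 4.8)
The plan is to derive Theorem~\ref{thm:fullduplexdmc} from Theorem~\ref{thm:FullDuplexDMC-R} by specializing the auxiliary random variables and the rate split. Since $\Rc$ is closed it suffices to show $\Rc^F(p)\subseteq\Rc$ for a fixed $p\in\Pc^F$, and since $\Rc$ is convex (two input distributions may be time-shared by appending a switch variable to $Q$, the alphabets in $\Pc$ being unconstrained) it suffices to place the vertices of the polytope $\Rc^F(p)$ inside $\Rc$. Write $Y_1=I(C_1;Y_2|X_2,Q)$, $Y_2=I(C_2;Y_1|X_1,Q)$, $c=I(C_1,C_2;Z|Q)$. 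The one structural fact I would use repeatedly is that $\Pc^F$ forces $C_1\perp C_2\mid Q$, so that $c=I(C_1;Z|Q)+I(C_2;Z|C_1,Q)=I(C_2;Z|Q)+I(C_1;Z|C_2,Q)$ and $I(C_1;Z|C_2,Q)+I(C_2;Z|C_1,Q)-c=I(C_1;C_2|Z,Q)\ge 0$.

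The baseline assignment takes $U_1=U_2=Q$ and $R_i^u=R_i^k=R_i^o=0$, so that $R_i=R_i^s$. Then \eqref{eq:a8}--\eqref{eq:a9} are vacuous, \eqref{eq:a1}--\eqref{eq:a4} reduce to $R_i+R_i^x\le Y_i$, and \eqref{eq:a5}--\eqref{eq:a7} say that $(R_1^x,R_2^x)$ lies on the dominant face of the Eve multiple-access channel $(C_1,C_2)\to Z$ given $Q$: $R_1^x+R_2^x=c$, $R_1^x\le I(C_1;Z|C_2,Q)$, $R_2^x\le I(C_2;Z|C_1,Q)$. By the identities above this face is non-empty, and a two-line elimination of $(R_1^x,R_2^x)$ shows that a feasible choice also satisfying $R_i+R_i^x\le Y_i$ exists exactly when $R_1\le Y_1-I(C_1;Z|Q)$ and $R_2\le Y_2-I(C_2;Z|Q)$; here the sum constraint $R_1+R_2\le Y_1+Y_2-c$ of $\Rc^F(p)$ is exactly what makes the subcase in which both reliability bounds bind go through. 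This handles every vertex in that sub-rectangle.

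For the remaining vertices the sum constraint forces at most one user, say user~$1$, to be above its threshold, so $R_1>Y_1-I(C_1;Z|Q)$ while $R_2\le Y_2-I(C_2;Z|Q)$. I would keep $U_2=Q$ and let user~$2$ run an ordinary wiretap code whose randomization $R_2^x$ both fills the dominant Eve-MAC face and, being itself secret, is handed to user~$1$ as a one-time-pad key; user~$1$ then either (when it still has an advantage, $I(C_1;Z|Q)\le Y_1$) keeps $U_1=Q$ and activates the slot $R_1^o$ -- the XOR of an open sub-message with that key, which the scheme of Theorem~\ref{thm:FullDuplexDMC-R} also reuses as binning randomization -- with $R_1^o\ge I(C_1;Z|Q)$ and $R_2^k=R_1^o$, or (when $I(C_1;Z|Q)>Y_1$, so user~$1$ has no physical secrecy of its own) takes $U_1=C_1$ and transmits its entire message through the cloud center, $R_1^u=R_1\le Y_1$, secured by the key ($R_2^k\ge R_1$ via \eqref{eq:a8}). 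In both cases, substituting $c=I(C_1;Z|Q)+I(C_2;Z|C_1,Q)$ collapses \eqref{eq:a1}--\eqref{eq:a9} to $R_1\le Y_1$ together with $R_1+R_2\le Y_1+Y_2-c$, which hold at the vertex; vertices on the $R_2$-axis are symmetric, a few of them also invoking the reverse key $R_1^k>0$. Taking the convex hull of the vertices and then the closure over $p\in\Pc^F$ yields $\Rc^F\subseteq\Rc$, and $\Rc\subseteq\Cc^F$ is Theorem~\ref{thm:FullDuplexDMC-R}.

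I expect the main obstacle to be precisely this feasibility of the Theorem~\ref{thm:FullDuplexDMC-R} rate split at the extreme points. Because \eqref{eq:a7} is an equality -- Eve must decode the injected randomization on the dominant face of her MAC -- a user sitting at its individual ceiling $R_i=Y_i$ has no room left to inject randomization of its own, so the randomization budget must be pushed onto the partner and/or the message rerouted through a key-secured cloud center, and no single choice of $(U_1,U_2)$ and of the split works uniformly over $\Rc^F(p)$. Keeping the resulting case split -- on the ordering of $I(C_1;Z|Q)$, $I(C_2;Z|Q)$ and the link rates -- finite, and gluing the pieces through convexity of $\Rc$, is the delicate step; verifying that any one assignment satisfies \eqref{eq:a1}--\eqref{eq:a9} is mechanical.
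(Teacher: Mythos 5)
Your proposal follows essentially the same route as the paper's proof: specializing Theorem~\ref{thm:FullDuplexDMC-R} by a case split on the signs of $I(C_i;Y_j|X_j,Q)-I(C_i;Z|Q)$, using the shared key to reach the corner points where one user sits at its individual ceiling and hence (because \eqref{eq:a7} is an equality) has no room for its own randomization, routing the message through the key-secured cloud center when a user has no physical advantage, and gluing the pieces by convexity --- which is exactly the paper's Cases 1--3. One correction: the one-time-pad key handed to user~1 must be the separately binned index of rate $R_2^k$, which the equivocation computation keeps secret from Eve, and \emph{not} the randomization index of rate $R_2^x$, which Eve is assumed to decode in that computation and therefore cannot serve as a key; your subsequent use of \eqref{eq:a8} with $R_2^k$ is the correct version of the argument.
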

\begin{IEEEproof}
Please refer to Appendix~\ref{prf:fullduplexdmc}.
\end{IEEEproof}

Note that the above region, $\Rc^F$, is achievable without the need to use superposition coding, hence it is not clear to us whether the use of a superposition code is needed or not.
(Please refer to Remark~\ref{rem:RegionWithoutU1U2} in
Appendix~\ref{prf:fullduplexdmc}.)

\subsection{The Modulo-Two Channel}
To shed more light on the structural properties of our
achievable rate region, we now consider the special case of the full-duplex
modulo-$2$ two-way wiretap channel described by the
following set of input-output relations.
\begin{eqnarray*}
{\bf Y}_1&=&{\bf X}_1\oplus {\bf X}_2 \oplus {\bf N}_1\\
{\bf Y}_2&=&{\bf X}_1\oplus {\bf X}_2\oplus {\bf N}_2\\
{\bf Z}&=&{\bf X}_1\oplus {\bf X}_2 \oplus {\bf N}_e,
\end{eqnarray*}
where ${\bf N}_1$$=\{N_1(1),\ldots,N_1(n)\}$,
${\bf N}_2$$=\{N_2(1),\ldots,N_2(n)\}$, and
${\bf N}_e$$=\{N_e(1),\ldots,N_e(n)\}$ are the additive binary noise
vectors impairing Alice, Bob, and Eve, respectively.
The corresponding transition probabilities are given by:
$P({N_1}(t)=1)=\epsilon_1$, $P(N_2(t)=1)=\epsilon_2$, and
$P(N_e(t)=1)=\epsilon_e$ for $i=1,\ldots,n$.
The secrecy capacity region is denoted by $\Cc^{FM}$. In this special case, the transmitted codeword reduces to the
modulo-$2$ sum of a binning codeword and an independent
{\em prefix} noise component, i.e.,
\begin{eqnarray*}
{\bf X}_1&=&{\bf C}_1\oplus\bar{{\bf N}}_1 \\
{\bf X}_2&=&{\bf C}_2\oplus\bar{{\bf N}}_2,
\end{eqnarray*}
where ${\bf \bar{N}}_1$$=\{\bar{N}_1(1),\ldots,\bar{N}_1(n)\}$,
${\bf \bar{N}}_2$$=\{\bar{N}_2(1),\ldots,\bar{N}_2(n)\}$ are the
{\em prefix} noise vectors transmitted by Alice and Bob.
The components of these vectors are generated according to i.i.d.
distributions with the following marginals:
$P(\bar{N}_1(t)=1)=\bar{\epsilon}_1$,
$P(\bar{N}_2(t)=1)=\bar{\epsilon}_2$ for
$i=1,\ldots,n$. The binning codebooks, on the other hand,
are generated according to a uniform i.i.d. distribution.
We further define the following crossover probabilities
to describe the cascade of the prefix and original channels.
\begin{eqnarray*}
P(y_1\neq c_2|c_2)&=&\hat{\epsilon}_1 \triangleq
\epsilon_1(1-\bar{\epsilon}_2)+\bar{\epsilon}_2(1-\epsilon_1)\\
P(y_2\neq c_1|c_1)&=&\hat{\epsilon}_2 \triangleq
\epsilon_2(1-\bar{\epsilon}_1)+\bar{\epsilon}_1(1-\epsilon_2)\\
P(z\neq (c_1\oplus c_2)|c_1,c_2)&=&\hat{\epsilon}_e \triangleq
\epsilon_e(1-\bar{\epsilon}_{12})+\bar{\epsilon}_{12}(1-\epsilon_e)
\end{eqnarray*}
where, $\bar{\epsilon}_{12}=\bar{\epsilon}_2(1-\bar{\epsilon}_1)+\bar{\epsilon}_1(1-\bar{\epsilon}_2)$.
The need for the channel prefixes is evident in the case when the physical channel does not offer a secrecy advantage. For example, for the case when all channels are noiseless ($\epsilon_1=\epsilon_2=\epsilon_e=0$), no positive secrecy rates are achievable with only binning and key sharing. However, it is easy to see that the rates $(R_1,R_2)=(1,0)$ and $(0,1)$ are achievable with a choice of $(\bar{\epsilon}_1,\bar{\epsilon}_2)= (0,0.5)$ and $(0.5,0)$, respectively. Using the above notation, the achievable region
in Theorem~\ref{thm:fullduplexdmc} reduces to the region ${\cal R}^{FM}$ defined as follows.
\begin{cor}\label{cor:fullduplexmodulo}
For the full-duplex modulo-$2$ two-way wiretap channel
\begin{equation*}
{\cal R}^{FM} \triangleq \textrm{closure of the convex hull of}
\left\{\underset{p\in{\cal P}^{FM}}{\bigcup} {\cal R}^{FM}(p)\right\}
\subseteq \Cc^{FM},
\end{equation*}
where ${\cal P}^{FM}$ is defined as,
\begin{eqnarray}
{\cal P}^{FM} \triangleq &&\{(\bar{\epsilon}_1,\bar{\epsilon}_2): 0 \leq \bar{\epsilon}_1,\bar{\epsilon}_2 \leq 1 \notag\},
\end{eqnarray}
and ${\cal R}^{FM}(p)$ is the closure of all non-negative rate
tuples $(R_1,R_2)$ satisfying
\begin{eqnarray*}
R_1&\leq& 1-H(\hat{\epsilon}_2)\\
R_2&\leq& 1-H(\hat{\epsilon}_1)\\
R_1+R_2&\leq& 1+H(\hat{\epsilon}_e)
-H(\hat{\epsilon}_1)-H(\hat{\epsilon}_2).
\end{eqnarray*}
Moreover, our achievable region contains the two corner points of the {\bf secrecy capacity region}, namely
\begin{eqnarray}
\max\limits_{(R_1,0)\in\Cc} R_1 &=& 1-H(\epsilon_1), \textrm{ and}\notag\\
\max\limits_{(0,R_2)\in\Cc} R_2 &=& 1-H(\epsilon_2).\notag
\end{eqnarray}
\end{cor}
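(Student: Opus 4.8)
The statement has two parts: the inclusion $\Rc^{FM}\subseteq\Cc^{FM}$, and the identification of the two single‑user corner points of $\Cc^{FM}$ together with the fact that they lie in the achievable region. For the inclusion the plan is simply to evaluate the region $\Rc^F$ of Theorem~\ref{thm:fullduplexdmc} on the modulo‑$2$ channel. I would take $Q$ deterministic --- this loses nothing, since the closed convex hull that appears explicitly in $\Rc^{FM}$ reinstates exactly the time sharing that $Q$ provides inside $\Rc^F$ --- let the binning codewords $C_1,C_2$ be independent and uniform, and let the prefix channels be $X_i=C_i\oplus\bar N_i$ with $\bar N_i\sim\mathrm{Bern}(\bar\epsilon_i)$ independent of everything. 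This is an admissible $p\in\Pc^F$, so the rate region it yields sits inside $\Rc^F\subseteq\Rc\subseteq\Cc^F=\Cc^{FM}$, and it remains only to compute the three mutual informations. Conditioning on $X_2$, which the receiver of $W_1$ knows, the map $C_1\mapsto Y_2$ is the cascade $C_1\to C_1\oplus\bar N_1\to C_1\oplus\bar N_1\oplus N_2$, i.e.\ a BSC of crossover $\hat\epsilon_2$, so by uniformity of $C_1$ one gets $I(C_1;Y_2|X_2,Q)=1-H(\hat\epsilon_2)$, and symmetrically $I(C_2;Y_1|X_1,Q)=1-H(\hat\epsilon_1)$. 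For the eavesdropper, $Z=(C_1\oplus C_2)\oplus(\bar N_1\oplus\bar N_2\oplus N_e)$ has effective noise $\mathrm{Bern}(\hat\epsilon_e)$ and $C_1\oplus C_2$ is uniform, so $I(C_1,C_2;Z|Q)=1-H(\hat\epsilon_e)$. Substituting into the three inequalities of Theorem~\ref{thm:fullduplexdmc} reproduces the definition of $\Rc^{FM}(p)$ verbatim; the union over $\Pc^{FM}$ and the closed convex hull then finish this part. The step is mechanical --- the only care needed is in tracking which BSCs compose in each term.

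For the corner points, achievability is obtained by letting the user that sends nothing \emph{jam}. To reach $(1-H(\epsilon_1),0)$ I would set $R_2=0$, have the silent user transmit only a fair‑coin prefix (crossover $\tfrac12$), and have the active user transmit a clean binning codeword. The fair‑coin prefix drives the effective eavesdropper crossover $\hat\epsilon_e$ to $\tfrac12$, so $I(C_1,C_2;Z|Q)=0$ and nothing leaks; at the same time the legitimate receiver of $W_1$ is the jamming user itself, which knows its own transmitted sequence exactly and subtracts it, leaving a clean BSC on which it decodes at the point‑to‑point rate $1-H(\epsilon_1)$. One then checks directly that $(1-H(\epsilon_1),0)$ satisfies the three inequalities of $\Rc^{FM}$ for this choice of $(\bar\epsilon_1,\bar\epsilon_2)$ --- the sum‑rate inequality being slack --- so the point already lies in $\Rc^{FM}$, a fortiori in $\Rc$ of Theorem~\ref{thm:FullDuplexDMC-R}; the symmetric construction yields $(0,1-H(\epsilon_2))$. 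The point to stress is that one jamming sequence does three jobs at once: it maximally confuses Eve, it is free to the user that emits it (that user carries no message), and it is transparent to the legitimate receiver --- a coincidence afforded precisely by the two‑way structure, in which the jammer is the intended receiver.

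For the converse I would show that $R_1\le 1-H(\epsilon_1)$ for \emph{every} achievable pair, which caps the corner. Let $Y^n$ be the received sequence at the decoder of $W_1$. Fano gives $nR_1\le I(W_1;Y^n)+n\delta_n$, and $I(W_1;Y^n)=H(Y^n)-H(Y^n|W_1)\le n-\sum_t H\!\big(Y(t)\,|\,Y^{t-1},W_1,W_2,X_1^t,X_2^t\big)=n-nH(\epsilon_1)$, the last equality because, given the time‑$t$ inputs and the past (a deterministic function of the messages and of pre‑$t$ noise only), $Y(t)$ equals the time‑$t$ channel noise up to a known shift and hence has conditional entropy $H(\epsilon_1)$. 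Thus $R_1\le 1-H(\epsilon_1)$, matching the achievability above, and the argument for $R_2$ is symmetric. Note this converse uses only the reliability constraint, not secrecy --- which is exactly the content of the statement: the two‑way interaction can freely redistribute the secrecy advantage between the users, but it cannot push a single user past the eavesdropper‑free point‑to‑point capacity of its own link.

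I expect the one place calling for care is this converse single‑letterization: because the protocol is two‑way, $X_1(t)$ depends on past received symbols and therefore on past noise, so one must condition only on quantities measurable strictly before time $t$ --- the messages together with $X_1^t,X_2^t$ suffice, whereas conditioning on the whole sequence $X_1^n$ would not, since that entangles future noise into the conditioning. Everything else is either a routine specialization of Theorem~\ref{thm:fullduplexdmc} or a direct verification of a handful of scalar inequalities.
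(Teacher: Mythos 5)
Your first part follows essentially the same route as the paper's Appendix C: specialize Theorem~\ref{thm:fullduplexdmc} to the modulo-$2$ channel with the Bernoulli prefixes and evaluate the three mutual informations. The only difference is stylistic --- the paper derives the three expressions as upper bounds valid for any input law on $(C_1,C_2)$ (the sum-rate bound via the chain culminating in $H(Y_1|X_1)+H(Y_2|X_2)-H(Z)\le 1$) and then remarks that equality holds for uniform inputs, whereas you simply plug in the uniform distribution from the start. Since the corollary is an achievability statement, your shortcut is sufficient and arguably cleaner; the paper's version buys the extra fact that uniform binning codebooks are optimal for the given prefixes. Where you genuinely diverge is the corner-point claim: the paper's appendix proves nothing about it (it only establishes the three rate bounds), while you supply both the achievability --- the silent user jamming with a fair-coin prefix, which matches the paper's in-text remark for the noiseless case and is the intended construction --- and a Fano-type converse showing no achievable pair can exceed the single-user point-to-point capacity. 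That converse is actually needed to justify the ``$=$'' in $\max_{(R_1,0)\in\Cc}R_1$, and your handling of the causality subtlety (conditioning on $X_1^t,X_2^t$ rather than $X_1^n,X_2^n$) is the right one; the only loose end is that Fano should be applied with the decoder's own side information ($W_2$ and its local randomness) in the conditioning, which costs nothing since $W_1$ is independent of it and your subsequent single-letterization already conditions on $W_2$.

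One caveat you have inherited from the statement itself: under the paper's channel model the decoder of $W_1$ is user $2$, whose additive noise has parameter $\epsilon_2$, so your jamming scheme with $(\bar\epsilon_1,\bar\epsilon_2)=(0,\tfrac12)$ delivers the point $(1-H(\epsilon_2),0)$ --- consistent with the region's constraint $R_1\le 1-H(\hat\epsilon_2)$ --- and your converse likewise yields $R_1\le 1-H(\epsilon_2)$. The displayed corner points in the corollary appear to have their indices swapped relative to the rest of the region; your argument is structurally correct but should carry $\epsilon_2$ for the $R_1$ corner and $\epsilon_1$ for the $R_2$ corner. Also, at that operating point the sum-rate inequality is tight rather than slack, though this does not affect membership of the point in $\Rc^{FM}$.
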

\begin{IEEEproof}
Please refer to Appendix~\ref{prf:fullduplexmodulo}.
\end{IEEEproof}

A few remarks are now in order.
\begin{enumerate}

\item The region in Corollary~\ref{cor:fullduplexmodulo} is strictly
larger than the ones reported in~\cite{TW-E,GGMAC-WT}, as demonstrated
by the numerical results of Fig.$~\ref{fig:modulo}$.
Here we compare our region with the one achieved by random binning
and key sharing only,
and channel prefixing only (\cite[Section 5]{TW-E}). The region
reported in~\cite[Theorem 2]{GGMAC-WT} can be achieved via binning without key sharing, hence,
is a {\bf strict} sub-region of Corollary~\ref{cor:fullduplexmodulo}.

\item The corner points of the
region in Corollary~\ref{cor:fullduplexmodulo} is
achieved by random binning and key sharing only if
$\epsilon_e > \max(\epsilon_1,\epsilon_2)$, and
achieved by only channel prefixing if
$\epsilon_e < \min(\epsilon_1,\epsilon_2)$.

\item The previous result identifies the separate role of channel
prefixing and binning. First, channel prefixing is used to
create an advantage of Alice and Bob over Eve via the {\bf joint}
optimization of $\bar{\epsilon}_1$ and $\bar{\epsilon}_2$.
Then, the binning codebooks are used to transform this
advantage into a secrecy gain for the two terminals.

\end{enumerate}

\subsection{The Gaussian Channel}\label{full_duplexGaussian}
In the full-duplex Gaussian setting, the channel is given by,
\begin{equation*}
{\bf Y}_1 = \sqrt{g_{11}} {\bf X}_1 +  {\bf X}_2+{\bf N}_1
\end{equation*}
\begin{equation*}
{\bf Y}_2 = {\bf X}_1+ \sqrt{g_{22}} {\bf X}_2+{\bf N}_2
\end{equation*}
\begin{equation*}
{\bf Z} = \sqrt{g_{e1}} {\bf X}_1+\sqrt{g_{e2}} {\bf X}_2+{\bf N}_e
\end{equation*}
where $g_{11}$, $g_{22}$, $g_{e1}$, and $g_{e2}$ are channel
coefficients, ${\bf N}_1$, ${\bf N}_2$, and ${\bf N}_e$ are i.i.d. noise
vectors with zero-mean unit-variance white Gaussian entries at
user $1$, user $2$, and Eve, respectively.
We assume the average power constraints given by
\begin{equation*}
\frac{1}{n}\sum_{t=1}^{n} (X_i(t))^2 \leq \rho_i, \textrm{ for } i=1,2.
\end{equation*}
The secrecy capacity of this channel is denoted by $\Cc^{FG}$.

We define $\gamma(x)\triangleq\frac{1}{2}\log(1+x)$ and
$h(X) = -\int f_X(x) \log f_X(x)$. 
The prefix to the channel from user $1$ to user $2$ is an additive white Gaussian noise channel with i.i.d. noise $\bar{N}_1 \sim {\cal N}(0,\rho_1^n)$, where the allocated power for user $1$ is distributed among the signal $C_1$ and the artificial noise $\bar{N}_1$. More specifically, $C_1 \sim {\cal N}(0,\rho_1^c)$, and $\rho_1^c+\rho_1^n = \rho_1 - \epsilon$, and the transmitted signal ${\bf X}_1 = {\bf C}_1 + {\bf \bar{N}}_1.$
By the weak law of large numbers, $\frac{1}{n}\sum_{t=1}^n (X_1(t))^2 \to
\rho_1-\epsilon$ as $n\to\infty$. ${\bf X}_2$ is constructed similarly to obtain the following.
\begin{cor}\label{cor:fullduplexgaussian}
For the full-duplex Gaussian two-way wiretap channel, the achievable rate region ${\cal R}^{FG}$ is given by,
\begin{center}
${\cal R}^{FG} \triangleq$ closure of the convex
hull of $\left\{\underset{p\in{\cal P}^{FG}}{\bigcup} {\cal R}^{FG}(p)\right\}
\subseteq \Cc^{FG}$,
\end{center}
where ${\cal P}^{FG}$ is defined as,
\begin{eqnarray*}
{\cal P}^{FG} \triangleq &&\{(\rho_1^c,\rho_1^n,\rho_2^c,\rho_2^n):
\rho_1^c+\rho_1^n \leq \rho_1, \notag
\rho_2^c+\rho_2^n \leq \rho_2 \},
\end{eqnarray*}
and ${\cal R}^{FG}(p)$ is the closure of all non-negative
rate tuples $(R_1,R_2)$ satisfying
\begin{equation*}
R_1 \leq \gamma\left(\frac{\rho_1^c }{1+\rho_1^n }\right)
\end{equation*}
\begin{equation*}
R_2 \leq \gamma\left(\frac{\rho_2^c }{1+\rho_2^n }\right)
\end{equation*}
\begin{eqnarray*}
R_1+R_2 &\leq &\gamma\left(\frac{\rho_1^c }{1+\rho_1^n }\right)
+\gamma\left(\frac{\rho_2^c }{1+\rho_2^n }\right)
-\gamma\left(\frac{\rho_1^c g_{e1}+ \rho_2^c g_{e2}}{1+\rho_1^n g_{e1}+ \rho_2^n g_{e2}}\right)
\end{eqnarray*}
\end{cor}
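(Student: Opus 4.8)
The plan is to derive Corollary~\ref{cor:fullduplexgaussian} by specializing Theorem~\ref{thm:fullduplexdmc} to the Gaussian channel through a standard discretization/quantization argument, combined with the explicit Gaussian channel-prefixing construction already described in the paragraph preceding the statement. Concretely, I would fix a distribution $p \in \Pc^{FG}$, i.e., a power split $(\rho_1^c,\rho_1^n,\rho_2^c,\rho_2^n)$, and take $Q$ deterministic (constant), $C_i \sim \Nc(0,\rho_i^c)$ independent across $i$, $\bar N_i \sim \Nc(0,\rho_i^n)$ independent of everything, and $X_i = C_i + \bar N_i$. This is exactly the Markov structure $P(q)P(c_1|q)P(c_2|q)P(x_1|c_1)P(x_2|c_2)$ required by $\Pc^F$ in Theorem~\ref{thm:fullduplexdmc}, so the three mutual-information bounds of $\Rc^F(p)$ are available; it remains only to evaluate them under the Gaussian choice.

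The second step is to compute each of the three conditional mutual informations. For $I(C_1;Y_2|X_2,Q)$: given $X_2$ (and $Q$ constant), the channel seen is $Y_2 - \sqrt{g_{22}}X_2 = C_1 + \bar N_1 + N_1$, a Gaussian channel with input power $\rho_1^c$ and effective noise power $1+\rho_1^n$, so this mutual information equals $\gamma\!\left(\tfrac{\rho_1^c}{1+\rho_1^n}\right)$; symmetrically $I(C_2;Y_1|X_1,Q)=\gamma\!\left(\tfrac{\rho_2^c}{1+\rho_2^n}\right)$. For the sum-rate bound I need $I(C_1,C_2;Z|Q)$: here $Z = \sqrt{g_{e1}}(C_1+\bar N_1) + \sqrt{g_{e2}}(C_2+\bar N_2) + N_e$, so $(C_1,C_2)$ is a jointly Gaussian input of total "signal" variance $\rho_1^c g_{e1} + \rho_2^c g_{e2}$ corrupted by independent Gaussian noise of variance $1 + \rho_1^n g_{e1} + \rho_2^n g_{e2}$, giving $I(C_1,C_2;Z|Q)=\gamma\!\left(\tfrac{\rho_1^c g_{e1}+\rho_2^c g_{e2}}{1+\rho_1^n g_{e1}+\rho_2^n g_{e2}}\right)$. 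Substituting these three evaluations into the region $\Rc^F(p)$ of Theorem~\ref{thm:fullduplexdmc} reproduces exactly the three inequalities defining $\Rc^{FG}(p)$. Taking the union over $p \in \Pc^{FG}$ and then the closure of the convex hull yields $\Rc^{FG}$; the convex-hull operation is justified because one can time-share between any two achievable operating points (formally, reintroduce $Q$ as a time-sharing variable over a finite set of Gaussian sub-distributions).

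The main technical obstacle, and the only place real care is needed, is that Theorem~\ref{thm:fullduplexdmc} is stated for \emph{discrete memoryless} channels with \emph{finite} input alphabets, whereas the Gaussian inputs $C_i$, $\bar N_i$, $X_i$ are continuous and the channel has a peak-free average power constraint. The standard remedy is a two-step limiting argument: (i) quantize the inputs $C_i,\bar N_i$ (hence $X_i$) and the outputs $Y_1,Y_2,Z$ to finite alphabets, apply Theorem~\ref{thm:fullduplexdmc} to the induced DM-TWC-E, and obtain an achievable region for the quantized channel; (ii) let the quantization get arbitrarily fine and invoke continuity/convergence of the relevant mutual-information functionals to the differential-entropy expressions above (this is where the definitions $\gamma(\cdot)$ and $h(\cdot)$ enter), together with the weak law of large numbers to confirm the per-codeword power constraint $\tfrac1n\sum_t X_i(t)^2 \to \rho_i-\epsilon \le \rho_i$ as noted in the text. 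One must also check that truncation of the Gaussians to satisfy a hard (rather than average) power bound costs only $\epsilon$ in rate and $o(1)$ in the leakage, which is routine. I would state this discretization step as a lemma (or simply cite the analogous argument from~\cite{Koyluoglu:Cooperative,GWT}) rather than carry out the $\delta$--$\epsilon$ bookkeeping in full, since it is entirely standard and the novel content is the joint power-splitting/artificial-noise choice, not the limiting procedure.
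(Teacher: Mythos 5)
Your proposal is correct and follows essentially the same route as the paper: specialize Theorem~\ref{thm:fullduplexdmc} with $|\Qc|=1$ and the Gaussian signal-plus-artificial-noise prefix $X_i=C_i+\bar N_i$, evaluate the three mutual informations (your computations are right, up to the inconsequential typo writing $N_1$ for the receiver noise $N_2$ in the first one), take the convex hull, and defer the discrete-to-continuous extension to standard quantization arguments from the literature. The paper's own proof is just a two-sentence sketch citing~\cite{GGMAC-WT} and~\cite{GGMAC-WT-Correction} for that extension, so your write-up is, if anything, more explicit than the original.
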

\begin{IEEEproof}
The proof follows by extending Theorem~\ref{thm:fullduplexdmc}
to continuous random variables, where we also set $|\Qc|=1$,
and use the convex hull operation. The tools needed to extend the probability of error and equivocation analysis are already available in the literature[e.g. see~\cite{GGMAC-WT} and~\cite{GGMAC-WT-Correction}]. 
\end{IEEEproof}

In Fig.~\ref{fig:gaussian}, we compare the region of Corollary~\ref{cor:fullduplexgaussian}
with the following special cases: 1) Both users implement cooperative binning and key sharing without channel prefixing
and 2) One of the users implements individual secrecy encoding \cite{wyner}, the
other helps only with channel prefixing. The same trends of the modulo-$2$ case are observed here
except for the fact that channel prefixing does not
achieve the two extreme points of ${\cal R}^{FG}$.
We note that the region
reported in~\cite[Theorem 2]{GGMAC-WT} can be achieved by
implementing binning without key sharing, and hence, is a sub-region of Corollary~\ref{cor:fullduplexgaussian}.
The scheme in \cite[Section V]{GGMAC-WT} is either binning only
at both users, or binning at one user and channel prefixing (jamming)
at the other user. The resulting regions are
subregions of Corollary~\ref{cor:fullduplexgaussian} (the first one is a subregion of the dashed region and the
second one is the dotted region in Fig.~$2$.). Next, we compare our results with that of~\cite{HeAndYener}.
Let,
\begin{equation*}
R_1^*\triangleq \max\limits_{\alpha\in[0,1]} \quad
\alpha \left[ \gamma(\rho_1)-
\left[ \gamma\left(\frac{g_{e1}\rho_1}{1+g_{e2}\rho_2}\right) -
\frac{1-\alpha}{\alpha} \left[
\gamma\left(\rho_2\right)
-
\gamma\left(\frac{g_{e2}\rho_2}{1+g_{e1}\rho_1}\right)
\right]^+
\right]^+
\right]^+
\end{equation*}
$R_2^*$ is obtained by reversing the indices above.
Then, the achievable rate region proposed in~\cite{HeAndYener}
is given by the convex hull of the following three points:
$$ [0,0],  [R_1^*, 0], \textrm{and }  [0, R_2^*].$$ We note that the region $\Rc^{FG}$ given in Corollary~\ref{cor:fullduplexgaussian} \emph{strictly} includes this one.
(The proof of the inclusion part is given in Appendix~\ref{prf:HeAndYener}.)
Fig.~\ref{fig:gaussian2} demonstrates the fact that the inclusion can be strict.
The same figure also includes the achievable region obtained by
\emph{backward key sharing only}.
In this scheme, users utilize only the one time pad scheme in a time
division manner where the node first receives a secret key
and then uses it to secure the message.
The corresponding region can be described as follows. Let
\begin{eqnarray*}
R_1^{\dagger} \triangleq \max_{\alpha\in[0,1]}\min\left\{
\alpha \gamma(\rho_1) ,
(1-\alpha)\left[
\gamma(\rho_2)-\gamma \left( \frac{g_{e2}\rho_2}{1+g_{e1}\rho_1} \right)
\right]^+
\right\}.
\end{eqnarray*}
$R_2^{\dagger}$ is obtained by reversing the indices above.
Then \emph{backward key sharing} achieves
the convex hull of the following three points:
$$ [0,0],  [R_1^{\dagger}, 0], \textrm{and }  [0, R_2^{\dagger}].$$
Note that, this is a subregion of $\Rc$ (given in Theorem~\ref{thm:FullDuplexDMC-R}),
in which $C_2$ is used to transmit secret key from user $2$ to user $1$,
and $U_1$ is utilized to transmit secret message in a one time pad
fashion. Comparing $R_1^{\dagger}$
and $R_1^*$ in Fig.~\ref{fig:gaussian2}, we can see that this scheme can achieve
higher rates than the ones reported in~\cite{HeAndYener}. We also remark that
this example is an evidence of the fact that the region in
Theorem~\ref{thm:FullDuplexDMC-R} strictly includes that of
Theorem~\ref{thm:fullduplexdmc}. (That is,
$\Rc^F \subsetneq \Rc$ as  $R_1^{\dagger}\notin \Rc^F$ but
$R_1^{\dagger}\in \Rc$ for the Gaussian channel.)
In summary, the region in Theorem~\ref{thm:FullDuplexDMC-R}
includes all the stated regions as special cases.

\section{Half-Duplex Channels}\label{half_duplex}
Our first step is to define
the following equivalent full-duplex model for the half-duplex channel.

\begin{defn}
For a given half-duplex channel governed by
$P(y_2,z|x_1)$, $P(y_1,z|x_2)$, $P(z|x_1,x_2)$, and $P(y_1)P(y_2)P(z)$ an
\emph{equivalent} full-duplex channel
$P^*(y_1,y_2,z|x_1,x_2)$ is defined as follows.

We allow the channel inputs to take the values
in $\Xc_i^*= \{ \Xc_i, ? \}$, where
$?$ represents the no transmission event.
Similarly the channel outputs take values in
$\Yc_i^*= \{ \Yc_i, ? \}$, where
$?$ represents the no reception event (due to
the half-duplex constraint).
Then, for the $t^{\textrm{th}}$ symbol time, the
full-duplex channel $P^*(y_1,y_2,z|x_1,x_2)$
is said to be in one of the following states:

$1$) $x_1(t)\in \Xc_1$, $x_2(t)=?$ :
User $1$ is transmitting, user $2$ is in no transmission state.

$2$) $x_1(t)=?$, $x_2(t)\in \Xc_2$ :
User $1$ is in no transmission state, user $2$ is transmitting.

$3$) $x_1(t)\in \Xc_1$, $x_2(t)\in \Xc_2$ :
Both users are transmitting.

$4$) $x_1(t)=?$, $x_2(t)=?$ :
Both users are in the no transmission state.

Accordingly, the channel $P^*(y_1,y_2,z|x_1,x_2)$ is given by
\begin{eqnarray*}\label{eq:pstar}
P^*(y_1,y_2,z|x_1,x_2) = \left\{
\begin{array}{c}
P(y_2,z|x_1,x_2=?) {\bf 1}_{\{y_1,?\}}, \textrm{ for state } 1 \\
P(y_1,z|x_1=?,x_2) {\bf 1}_{\{y_2,?\}}, \textrm{ for state } 2 \\
P(z|x_1,x_2) {\bf 1}_{\{y_1,?\}} {\bf 1}_{\{y_2,?\}}, \textrm{ for state } 3 \\
P(y_1,y_2,z|x_1,?,x_2=?) , \textrm{ for state } 4,
\end{array}
\right.
\end{eqnarray*}
where ${\bf 1}_{\{x,y\}}=1,$ if $x=y$ and ${\bf 1}_{\{x,y\}}=0,$ if $x \neq y$, and
$P(y_2,z|x_1,x_2=?)$, $P(y_1,z|x_1=?,x_2)$, $P(z|x_1,x_2)$, and
$P(y_1,y_2,z|x_1=?,x_2=?)$ are given by the half-duplex channel.
\end{defn}

Using this definition and our results for the full-duplex channel, we obtain the following result.

\begin{cor}[Deterministic Scheduling]\label{thm:HalfDuplexDeterministicScheduling}
The following region ${\cal R}^{H-D}$ is achievable for the half-duplex DM-TWC-E with deterministic scheduling.
\begin{equation*}
{\cal R}^{H-D} \triangleq \textrm{ the closure of }
\left\{\underset{P\in{\cal P}^{H},P_{s1}+P_{s2}=1}{\bigcup} {\cal R}^{H-D}(P)\right\},
\end{equation*}
where ${\cal P}^{H}$ denotes the set of all joint distributions
of the random variables $Q$, $C_1$, $C_2$, $X_1$, and $X_2$ satisfying
\begin{equation*}
P(q,c_1,c_2,x_1,x_2)=P(q)P(c_1|q)P(c_2|q)P(x_1|c_1)P(x_2|c_2),
\end{equation*}
${\cal R}^{H-D}(P)$ is the closure of all non-negative
rate tuples $(R_1,R_2)$ satisfying
\begin{eqnarray*}
R_1&\leq& P_{s1}I(C_1;Y_2|Q,\textrm{state 1})\\
R_2&\leq& P_{s2}I(C_2;Y_1|Q,\textrm{state 2})\\
R_1+R_2&\leq& P_{s1}[I(C_1;Y_2|Q,\textrm{state 1})-
I(C_1;Z|Q,\textrm{state 1})]^+\nonumber\\
&&+P_{s2}[I(C_2;Y_1|Q,\textrm{state 2})-I(C_2;Z|Q,\textrm{state 2})]^+,
\end{eqnarray*}
and the channel is given by $P^*(y_1,y_2,z|x_1,x_2)$ as defined in~\eqref{eq:pstar}.

\end{cor}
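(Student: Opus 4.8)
The plan is to obtain $\Rc^{H-D}$ by specializing our full-duplex achievability result, Theorem~\ref{thm:FullDuplexDMC-R}, to the equivalent channel $P^{*}(y_1,y_2,z\mid x_1,x_2)$ introduced above. The first observation is that, because the scheduling here is \emph{deterministic} --- the partition of the $n$ symbol times into ``user~$1$ active'' and ``user~$2$ active'' slots is chosen in advance and does not depend on the messages, keys, or local randomness, and since $P_{s1}+P_{s2}=1$ only states~$1$ and~$2$ of the equivalent channel are ever used --- any $(n,M_1,M_2,P_{e,1},P_{e,2},L_n)$ code for the full-duplex DM-TWC-E $P^{*}$ can be implemented symbol-for-symbol over the underlying half-duplex channel: whenever a code symbol equals $?$, the corresponding terminal stays silent and listens, and whenever it lies in $\Xc_i$, that terminal transmits and discards its own received symbol (which $P^{*}$ has already forced to be $?$); both terminals know the fixed schedule, so decoding is unaffected. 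Hence every rate pair achievable for $P^{*}$ is achievable for the half-duplex channel, and it remains only to evaluate the region of Theorem~\ref{thm:FullDuplexDMC-R} for $P^{*}$ under a convenient input distribution.

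Second, I would fold the schedule into the time-sharing variable. Take $Q=(Q_0,S)$ with $S\in\{1,2\}$, $\Pr[S=1]=P_{s1}$, $\Pr[S=2]=P_{s2}$, and choose the conditional law of $(U_i,C_i,X_i)$ given $Q$ so that, on the event that the other user is scheduled, $U_i$ and $C_i$ are held at a fixed constant and $X_i=?$ almost surely (user $i$ silent). With this structure, conditioning on state~$1$ makes $X_2=Y_1=?$ and $U_2,C_2$ constant, so in \eqref{eq:a1}--\eqref{eq:a9} every term that charges user~$2$ or that carries information through $Y_1$ drops out, the term $I(C_1;Z\mid U_1,U_2,C_2,Q)$ in \eqref{eq:a5} and the term $I(C_1,C_2;Z\mid U_1,U_2,Q)$ in \eqref{eq:a7} both collapse to $I(C_1;Z\mid U_1,Q_0,S=1)$, and $I(U_1,C_1;Y_2\mid X_2,Q)$ restricts to $I(U_1,C_1;Y_2\mid Q_0,S=1)$; state~$2$ is symmetric. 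Averaging over $S$ attaches the weights $P_{s1}$ and $P_{s2}$ to these per-state conditional informations, which --- after the degenerate choice of $U_i$ that makes $I(U_i,C_i;Y_j\mid\cdot)=I(C_i;Y_j\mid\cdot)$, so that only the prefix variable $C_i$ of the statement survives --- are exactly $I(C_1;Y_2\mid Q,\text{state }1)$, $I(C_1;Z\mid Q,\text{state }1)$, $I(C_2;Y_1\mid Q,\text{state }2)$, and $I(C_2;Z\mid Q,\text{state }2)$.

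Third, I would project the resulting polytope onto $(R_1,R_2)=(R_1^u+R_1^s+R_1^o,\,R_2^u+R_2^s+R_2^o)$ by eliminating, via Fourier--Motzkin, the auxiliary rate components $R_i^u,R_i^s,R_i^k,R_i^o,R_i^x$. The two single-user bounds $R_1\le P_{s1}I(C_1;Y_2\mid Q,\text{state }1)$ and $R_2\le P_{s2}I(C_2;Y_1\mid Q,\text{state }2)$ fall out immediately from the specialized \eqref{eq:a2} and \eqref{eq:a4}. The substantive point --- and the step I expect to be the main obstacle --- is recovering the \emph{positive-part} operators in the sum-rate bound rather than the plain, possibly negative differences $I(C_1;Y_2\mid Q,\text{state }1)-I(C_1;Z\mid Q,\text{state }1)$ and its state-$2$ counterpart. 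Their source is operational: when in a given state the legitimate advantage is non-positive, the active user should not bin for secrecy against Eve in that state at all, but instead transmit symbols protected by a one-time pad whose key is the \emph{equivocation surplus produced by the other user in its own state}, the key transfer being metered precisely by \eqref{eq:a8}--\eqref{eq:a9}. Turning this into a clean derivation requires (i) using time-sharing through $Q_0$ to combine, within each state, a ``physical-secrecy'' mode with a ``pure key-relaying / one-time-pad'' mode --- permissible because $\Rc^{H-D}$ is the closure of a union already ranging over all such $Q_0$ --- and (ii) verifying, case by case on the signs of the two advantages, that the union of the corresponding achievable polytopes coincides with $\Rc^{H-D}(P)$, and is neither larger nor smaller. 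The reliability and equivocation analyses themselves need nothing beyond what is already established for Theorem~\ref{thm:FullDuplexDMC-R}, applied here to $P^{*}$.
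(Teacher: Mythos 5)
Your proposal is correct and follows essentially the same route as the paper: the paper's (very terse) proof likewise invokes Theorem~\ref{thm:FullDuplexDMC-R} on the equivalent channel $P^*$, absorbs the publicly known schedule into the time-sharing variable by replacing $Q$ with $\{Q,S\}$, and restricts to states $1$ and $2$. Your elaboration of how the positive-part operators in the sum-rate bound emerge from the key-sharing constraints \eqref{eq:a8}--\eqref{eq:a9} (mirroring the case analysis of Appendix~\ref{prf:fullduplexdmc}, with the cloud center $U_i$ replacing $C_i$ when a state's secrecy advantage is non-positive) is exactly the detail the paper leaves implicit, and only achievability --- not the reverse inclusion you mention --- is actually required.
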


\begin{proof}
The proof follows by Theorem~\ref{thm:FullDuplexDMC-R} with the channel given
by $P^*(y_1,y_2,z|x_1,x_2)$. In each block
we randomly select a state $S=k$ with probability $P_{s,k}$, and replace $Q$ by $\{Q,S\}$, where
the random sequence $\sv$ represents the channel states
(and given to all nodes).
The achievable region can be represented
with the given description, where the inputs
are chosen such that we only utilize state $1$ and $2$
as the states $3$ and $4$ do not increase the
achievable rates.
\end{proof}

The previous region is achievable with a deterministic scheduling approach whereby the two users Alice and Bob {\em a-priori} agree on the schedule. Consequently, Eve is made aware of the schedule.  Now, in order to further confuse the eavesdropper, we propose a
\emph{novel randomized scheduling} scheme whereby, in each channel use, user $i$ will be in a transmission state
with probability $P_i$. Clearly, this approach will result in collisions, wasting some opportunities for using the channels. However, as established shortly, the gain resulting from confusing Eve about the source of each transmitted symbol will outweigh these inefficiencies in many relevant scenarios. To simplify our derivations, we assume that all the nodes
can identify perfectly state $4$ (no transmission
state). Furthermore, we also give Eve {\bf an additional advantage} by informing her of the symbol durations belonging to state $3$, and as a result we have the term $-P_1P_2 I(C_1,C_2;Z|Q,\text{state }3)$ in the sum rate constraint below. These assumptions are practical in the Gaussian
channel, where the users can use the received power levels to distinguish these states. The following result characterizes the corresponding achievable region.

\begin{cor}[Randomized Scheduling]\label{thm:HalfDuplexRandomScheduling}
The region ${\cal R}^{H}$ is achievable for the half-duplex DM-TWC-E with randomized scheduling.
\begin{equation*}
{\cal R}^{H} \triangleq \textrm{ closure of }
\left\{\underset{P\in{\cal P}^{H},0\leq P_1,P_2 \leq 1}
{\bigcup} {\cal R}^{H}(P)\right\},
\end{equation*}
where ${\cal P}^{H}$ denotes the set of all joint distributions
of the random variables $Q$, $C_1$, $C_2$, $X_1$, and $X_2$ satisfying
\begin{equation*}
P(q,c_1,c_2,x_1,x_2)=P(q)P(c_1|q)P(c_2|q)P(x_1|c_1)P(x_2|c_2),
\end{equation*}
${\cal R}^{H}(P)$ is the closure of all non-negative
rate tuples $(R_1,R_2)$ satisfying
\begin{eqnarray*}
R_1&\leq& P_1(1-P_2)I(C_1;Y_2|X_2,Q,\textrm{state 1})\\
R_2&\leq& (1-P_1)P_2I(C_2;Y_1|X_1,Q,\textrm{state 2})\\
R_1+R_2&\leq& P_1(1-P_2)I(C_1;Y_2|X_2,Q,\textrm{state 1})+(1-P_1)P_2I(C_2;Y_1|X_1,Q,\textrm{state 2})\nonumber\\
&& - P_1P_2I(C_1,C_2;Z|Q,\textrm{state 3})
- (P_1(1-P_2)+(1-P_1)P_2) I(C_1,C_2;Z|Q,\textrm{state 1 or 2}),
\end{eqnarray*}
and the channel is given by $P^*(y_1,y_2,z|x_1,x_2)$ as defined in~\eqref{eq:pstar}.
\end{cor}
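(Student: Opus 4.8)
The plan is to follow the reduction used in the proof of Corollary~\ref{thm:HalfDuplexDeterministicScheduling}: regard the equivalent full-duplex channel $P^*$ defined above as a particular DM-TWC-E and invoke Theorem~\ref{thm:fullduplexdmc} (a specialization of Theorem~\ref{thm:FullDuplexDMC-R}). The essential difference from the deterministic case is that the schedule is now produced by the two users' private coins rather than agreed in advance, so it cannot be folded into the time-sharing variable $Q$. Instead I would realize the randomized schedule as part of the channel prefixing: take the prefix of user $i$ to be the memoryless channel $P(x_i|c_i)=(1-P_i)\,{\bf 1}_{\{x_i=?\}}+P_i\,\tilde P(x_i|c_i)$, where $\tilde P(\cdot|\cdot)$ and the codebook-generating law $P(c_i|q)$ are those from $\Pc^{H}$ and the $C_i$'s range over the original alphabets $\Xc_i$. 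Because the prefix is applied symbol-by-symbol with fresh local randomness, the silence indicator $B_i\triangleq{\bf 1}_{\{X_i=?\}}$ of user $i$ is independent of $(C_i,W_i)$ given $Q$, and the four states of $P^*$ then occur with the probabilities $P_1(1-P_2)$, $(1-P_1)P_2$, $P_1P_2$, $(1-P_1)(1-P_2)$.

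Next I would encode the side information available to Eve. Since the corollary lets Eve recognize the collision state $3$ and the silence state $4$ but not distinguish state $1$ from state $2$, I would apply Theorem~\ref{thm:fullduplexdmc} not to $P^*$ directly but to the DM-TWC-E whose eavesdropper output is $\tilde Z\triangleq(Z,\tilde S)$, where $\tilde S=f(B_1,B_2)\in\{\{1,2\},3,4\}$ is the deterministic function of the inputs reporting which of these three events occurred. This is again a valid DM-TWC-E, so the theorem applies verbatim; and since handing Eve $\tilde S$ can only enlarge the leakage, the region so obtained is still achievable for the original half-duplex channel.

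It then remains to evaluate the three mutual-information quantities of Theorem~\ref{thm:fullduplexdmc} for this channel and input law, which is routine. Conditioning $I(C_1;Y_2|X_2,Q)$ on $X_2$ makes it vanish when $X_2\neq?$ (user $2$ then receives nothing) and, on $\{X_2=?\}$, makes it vanish further when $X_1=?$ (state $4$), so that $I(C_1;Y_2|X_2,Q)=P_1(1-P_2)\,I(C_1;Y_2|X_2,Q,\textrm{state }1)$, and symmetrically for user $2$; these give the two single-rate bounds. For the leakage, $I(C_1,C_2;\tilde Z|Q)=I(C_1,C_2;\tilde S|Q)+I(C_1,C_2;Z|\tilde S,Q)$; the first summand is $0$ because $\tilde S$ is a function of $(B_1,B_2)$, which is independent of $(C_1,C_2)$ given $Q$; and the second splits over the values of $\tilde S$ as $P_1P_2\,I(C_1,C_2;Z|Q,\textrm{state }3)+\big(P_1(1-P_2)+(1-P_1)P_2\big)\,I(C_1,C_2;Z|Q,\textrm{state 1 or 2})$, the $\tilde S=4$ branch contributing nothing (there $Z$ is independent of the messages) and the $\tilde S=\{1,2\}$ branch being a single mutual information under the mixture of the state-$1$ and state-$2$ channels with weights proportional to $P_1(1-P_2)$ and $(1-P_1)P_2$. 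Substituting these into the three inequalities of Theorem~\ref{thm:fullduplexdmc}, taking $|\Qc|=1$, and then taking the union over $P\in\Pc^{H}$ and $0\le P_1,P_2\le 1$ followed by the closure yields exactly $\Rc^{H}$.

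The main obstacle is not the mutual-information arithmetic but the justification that this reduction is sound. One must verify that (i) writing the schedule into the prefix channels preserves the memoryless prefix/channel structure assumed by Theorem~\ref{thm:fullduplexdmc} and Theorem~\ref{thm:FullDuplexDMC-R}; (ii) the silence indicators are genuinely independent of $(C_1,C_2,W_1,W_2)$ given $Q$, so that $I(C_1,C_2;\tilde S|Q)=0$ and Eve's side information leaks no codebook information by itself; and (iii) conditioned on $\tilde S=\{1,2\}$, Eve's channel is indeed the stated mixture. Once these are in place, the reliability and equivocation estimates --- including the $O(\epsilon)$ slack of Definition~\ref{eqn:achievable} and the concentration of the numbers of symbols falling in each state --- are inherited directly from the proof of Theorem~\ref{thm:fullduplexdmc}. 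The standing assumption that every node can identify state $4$ is used only to ensure that no-transmission symbols carry no residual information and can be dropped cleanly by the decoders.
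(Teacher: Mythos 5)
Your argument is correct, and it reaches the region by a genuinely different route than the paper. The paper's proof (Appendix~\ref{prf:HalfDuplexRandomScheduling}) rebuilds the coding scheme from scratch on the state-4-identifiable channel $P^{**}$: it draws codewords of length $n_1+n_3=P_1n'$, consumes one codeword symbol per Bernoulli-selected transmission slot, models collisions as random erasures, and then re-derives the reliability and equivocation conditions with explicit $n_i/n'$ weights before simplifying as in Appendix~\ref{prf:fullduplexdmc}. You instead absorb the Bernoulli schedule into the memoryless prefix $P(x_i|c_i)=(1-P_i)\mathbf{1}_{\{x_i=?\}}+P_i\tilde P(x_i|c_i)$ on the extended alphabet, hand Eve the coarse state indicator $\tilde S$ as an enhanced output $\tilde Z=(Z,\tilde S)$, and then invoke Theorem~\ref{thm:fullduplexdmc} verbatim; the weighted mutual-information expressions drop out of the chain rule $I(C_1,C_2;\tilde Z|Q)=I(C_1,C_2;\tilde S|Q)+I(C_1,C_2;Z|\tilde S,Q)$ together with the independence of the silence indicators from the codewords. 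Both schemes yield identical rates (your full-length codeword with prefix-induced erasures versus the paper's shortened codeword over a collision-erasure channel give the same per-channel-use mutual informations), and your observation that private scheduling randomness is exactly channel prefixing — and therefore cannot live in the public variable $Q$ — is the correct conceptual reason the deterministic-scheduling reduction does not apply here. What your route buys is modularity: the error-probability and equivocation analysis, the end-of-codeword boundary effects, and the concentration arguments are all inherited from Theorem~\ref{thm:FullDuplexDMC-R} rather than reproved; what the paper's route makes more explicit is the operational picture of which symbols are erased where. The one point to state carefully (which you do flag) is that Theorem~\ref{thm:fullduplexdmc} must be applied to the channel in which state 4 is identifiable at the legitimate receivers (the paper's $P^{**}$), since otherwise $I(C_1;Y_2|X_2,Q)$ would be evaluated over an unresolvable mixture of states 1 and 4 and would fall short of $P_1(1-P_2)I(C_1;Y_2|X_2,Q,\textrm{state }1)$.
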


\begin{proof}
Please refer to Appendix~\ref{prf:HalfDuplexRandomScheduling}.
\end{proof}

Similar to the full-duplex scenario, we now specialize our results to the modulo-$2$ case. We model this channel as
a \emph{ternary input} channel where the third input corresponds
to the no-transmission event. This way, the three nodes can identify
the symbol intervals when no one is transmitting. Therefore, those symbols will be identified and erased,
and the crossover probabilities corresponding to the other three
states are given by,
\begin{eqnarray*}
P(z\neq c_1|\textrm{only user $1$ is transmitting})&=&\epsilon_{e1} \triangleq
\epsilon_e(1-\bar{\epsilon}_1)+\bar{\epsilon}_1(1-\epsilon_e) \notag\\
P(z\neq c_2|\textrm{only user $2$ is transmitting})&=&\epsilon_{e2} \triangleq
\epsilon_e(1-\bar{\epsilon}_2)+\bar{\epsilon}_2(1-\epsilon_e)\notag\\
P(z\neq(c_1\oplus c_2)|\textrm{both users are transmitting})&=&\hat{\epsilon}_{e} \notag
\end{eqnarray*}
where $\hat{\epsilon}_e$ is given as in the previous section. Moreover, for some $\mu_1,\mu_2\in[0,1]$,
we define the followings,
\begin{eqnarray*}
P(y_1=1|\textrm{only user $2$ is transmitting})&=&\hat{\mu}_1
\triangleq \hat{\epsilon}_1(1-\mu_2)+\mu_2(1-\hat{\epsilon}_1)\\
P(y_2=1|\textrm{only user $1$ is transmitting})&=&\hat{\mu}_2
\triangleq \hat{\epsilon}_2(1-\mu_1)+\mu_1(1-\hat{\epsilon}_2)\\
P(z=1|\textrm{only user $1$ is transmitting})&=&\mu_{e1}
\triangleq \epsilon_{e1}(1-\mu_1)+\mu_1(1-\epsilon_{e1})\\
P(z=1|\textrm{only user $2$ is transmitting})&=&\mu_{e2}
\triangleq \epsilon_{e2}(1-\mu_2)+\mu_2(1-\epsilon_{e2})\\
P(z=1|\textrm{both users are transmitting})&=&\hat{\mu}_e
\triangleq \hat{\epsilon}_e(1-\mu_{12})+\mu_{12}(1-\hat{\epsilon}_e),
\end{eqnarray*}
where, $\hat{\epsilon}_1$ and $\hat{\epsilon}_2$ are given
as in the previous section, and $\mu_{12}=\mu_1(1-\mu_2)+\mu_2(1-\mu_1)$. Using these definitions, the following result is obtained.
\begin{prop}\label{prop:halfduplexmodulo}
The set of achievable rates for the half-duplex modulo-$2$
two-way wiretap channel ${\cal R}^{HM}$ is given by,
\begin{equation*}
{\cal R}^{HM} \triangleq \textrm{ closure of the convex hull of }
\left\{\underset{P\in{\cal P}^{HM}}{\bigcup} {\cal R}^{HM}(P)\right\},
\end{equation*}
where ${\cal P}^{HM}$ is defined as,
\begin{eqnarray*}
{\cal P}^{HM} \triangleq \{(\bar{\epsilon}_1,\bar{\epsilon}_2,\mu_1,\mu_2,P_1,P_2): 0 \leq \bar{\epsilon}_1,\bar{\epsilon}_2,\mu_1,\mu_2,P_1,P_2 \leq 1, \notag\},
\end{eqnarray*}
and ${\cal R}^{HM}(P)$ is the closure of all non-negative rate tuples $(R_1,R_2)$ satisfying
\begin{equation*}
R_1\leq P_1(1-P_2)(H(\hat{\mu}_2)-H(\hat{\epsilon}_2))
\end{equation*}
\begin{equation*}
R_2\leq P_2(1-P_1)(H(\hat{\mu}_1)-H(\hat{\epsilon}_1))
\end{equation*}
\begin{eqnarray*}
R_1+R_2 &\leq& P_1(1-P_2)(H(\hat{\mu}_2)-H(\hat{\epsilon}_2))+P_2(1-P_1)(H(\hat{\mu}_1)-H(\hat{\epsilon}_1))\notag\\
&&\:{-}\:P_1P_2(H(\hat{\mu}_e)-H(\hat{\epsilon}_e))\notag\\
&&\:{-}\:(P_1(1-P_2)+P_2(1-P_1))\notag\\
&&\:\:\:\bigg(H(\mu_{e1}d_1+\mu_{e2}d_2)-0.5 H(d_1 \epsilon_{e1} + d_2 \epsilon_{e2}) - 0.5 H(d_1 (1-\epsilon_{e1}) + d_2 \epsilon_{e2})\bigg)\notag,
\end{eqnarray*}
where
\begin{eqnarray*}
d_1&=&\frac{P_1(1-P_2)}{P_1(1-P_2)+P_2(1-P_1)},\textrm{ and}\\
d_2&=&1-d_1.
\end{eqnarray*}
\end{prop}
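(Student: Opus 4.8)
The plan is to obtain $\Rc^{HM}$ as a direct specialization of the randomized-scheduling region $\Rc^{H}$ of Corollary~\ref{thm:HalfDuplexRandomScheduling}, applied to the equivalent full-duplex channel $P^{*}$ built from the ternary-input modulo-$2$ model. First I would fix the single-letter input law: take $|\Qc|=1$, let each prefix channel be the binary-symmetric one induced by $\Xm_{i}=\Cm_{i}\oplus\bar{\Nm}_{i}$ with $\bar N_{i}\sim\textrm{Bern}(\bar\epsilon_{i})$, and generate the binning codebooks i.i.d.\ with $P(C_{i}=1)=\mu_{i}$. Since the no-transmission symbols (state $4$) are identified and erased, they drop out entirely, and for each of the remaining three states the cascade of prefix and physical channel, seen from the active codeword symbol(s), is a BSC: crossover $\hat\epsilon_{2}$ to $Y_{2}$ in state $1$, crossover $\hat\epsilon_{1}$ to $Y_{1}$ in state $2$, crossover $\hat\epsilon_{e}$ to $Z$ (with input $C_{1}\oplus C_{2}$) in state $3$, and crossovers $\epsilon_{e1}$ and $\epsilon_{e2}$ to $Z$ in states $1$ and $2$ respectively. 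Feasibility of the finite-alphabet input constraint is immediate, so $\Rc^{HM}(P)\subseteq\Rc^{H}(P)$ for every such $P$, whence $\Rc^{HM}\subseteq\Cc^{HM}$ by Corollary~\ref{thm:HalfDuplexRandomScheduling} together with the usual closure-of-convex-hull step.

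The routine part is evaluating the single-state mutual informations. In state $1$, conditioning makes $X_{2}=?$ deterministic and $Y_{2}=C_{1}\oplus\bar N_{1}\oplus N_{2}$ is a $\textrm{BSC}(\hat\epsilon_{2})$ output driven by $C_{1}\sim\textrm{Bern}(\mu_{1})$, whose marginal is $\textrm{Bern}(\hat\mu_{2})$; hence $I(C_{1};Y_{2}|X_{2},Q,\textrm{state }1)=H(\hat\mu_{2})-H(\hat\epsilon_{2})$. The state-$2$ term is the mirror image, $H(\hat\mu_{1})-H(\hat\epsilon_{1})$, and the state-$3$ term is the same computation for $Z=(C_{1}\oplus C_{2})\oplus(\textrm{crossover }\hat\epsilon_{e})$ with input $C_{1}\oplus C_{2}\sim\textrm{Bern}(\mu_{12})$, giving $H(\hat\mu_{e})-H(\hat\epsilon_{e})$. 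Substituting these into the two individual-rate bounds and the first three lines of the sum-rate bound of Corollary~\ref{thm:HalfDuplexRandomScheduling} reproduces everything in the proposition except the last line.

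The remaining term, $I(C_{1},C_{2};Z|Q,\textrm{state }1\textrm{ or }2)$, is the main obstacle, and it is where randomized scheduling pays off. Conditioned only on the event that exactly one user transmitted, Eve does not know which one, so the channel from the pair $(C_{1},C_{2})$ to $Z$ is a $(d_{1},d_{2})$-mixture of two binary-symmetric channels: with probability $d_{1}$ it is a $\textrm{BSC}(\epsilon_{e1})$-corrupted copy of $C_{1}$ (independent of $C_{2}$), and with probability $d_{2}$ a $\textrm{BSC}(\epsilon_{e2})$-corrupted copy of $C_{2}$. I would compute the two entropies separately: $P(Z=1)=d_{1}\mu_{e1}+d_{2}\mu_{e2}$ yields $H(Z)=H(\mu_{e1}d_{1}+\mu_{e2}d_{2})$, while $H(Z|C_{1},C_{2})$ follows by enumerating the four realizations of $(C_{1},C_{2})$, for which the crossover presented to Eve is $d_{1}\,(\epsilon_{e1}\textrm{ or }1-\epsilon_{e1})+d_{2}\,(\epsilon_{e2}\textrm{ or }1-\epsilon_{e2})$. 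The key simplification is that $H(p)=H(1-p)$ combined with $d_{1}+d_{2}=1$ collapses these four binary entropies onto the two distinct values $H(d_{1}\epsilon_{e1}+d_{2}\epsilon_{e2})$ and $H(d_{1}(1-\epsilon_{e1})+d_{2}\epsilon_{e2})$, and grouping the complementary realizations of $(C_{1},C_{2})$ then gives exactly the parenthesized factor in the last line of the statement. Plugging this into the sum-rate bound, collecting terms and taking the closure of the convex hull over $(\bar\epsilon_{1},\bar\epsilon_{2},\mu_{1},\mu_{2},P_{1},P_{2})\in\Pc^{HM}$ completes the proof; the one point deserving care is the bookkeeping of Eve's side information --- she is told the state-$3$ and state-$4$ positions but must guess within states $1$ and $2$ --- which is precisely what produces the split of the equivocation penalty into a $P_{1}P_{2}$ term and a $(P_{1}(1-P_{2})+P_{2}(1-P_{1}))$ term.
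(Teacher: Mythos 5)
Your proposal follows essentially the same route as the paper's own proof: specialize Corollary~\ref{thm:HalfDuplexRandomScheduling} with $|\Qc|=1$, evaluate the state-$1$, state-$2$, and state-$3$ mutual informations as BSC quantities $H(\hat{\mu}_2)-H(\hat{\epsilon}_2)$, $H(\hat{\mu}_1)-H(\hat{\epsilon}_1)$, $H(\hat{\mu}_e)-H(\hat{\epsilon}_e)$, and obtain the ``state 1 or 2'' term by computing $H(Z)=H(\mu_{e1}d_1+\mu_{e2}d_2)$ together with the four conditional entropies $H(Z|c_1,c_2)$, which collapse pairwise via $H(p)=H(1-p)$ exactly as in Appendix~F. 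The only remark worth adding is that both you and the paper pass silently over the fact that the $0.5$ weights on the two collapsed conditional entropies correspond to $P(C_1=C_2)=1/2$, which holds when at least one of $\mu_1,\mu_2$ equals $1/2$; this is a feature of the paper's own statement rather than a defect of your argument.
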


\begin{IEEEproof}
Please refer to Appendix~\ref{prf:halfduplexmodulo}.
\end{IEEEproof}

The advantage offered by {\em randomized scheduling} is best demonstrated in the following example. First, we observe that cooperative binning and channel prefixing scheme with {\em deterministic} scheduling fails to achieve a non-zero secrecy rate if Eve's channel is {\bf not} more noisy than the legitimate channels. Now, consider the noiseless case, i.e., $\epsilon_1=\epsilon_2=\epsilon_e=0$. By setting $\mu_1=\mu_2=P_1=P_2=0.5$, $\bar{\epsilon}_1=0$, and $\bar{\epsilon}_2=0.5$,
Proposition~\ref{prop:halfduplexmodulo} shows that the randomized
scheduling approach allows user $1$ to achieve a secure rate of $R_1=0.25-0.5(1-H(0.25))>0.$

The final step is to specialize the region to the Gaussian channel with half-duplex nodes. Eve is again assumed to {\em perfectly}
identify the no transmission and simultaneous transmission states.
We select codewords and jamming sequences as Gaussian (with
powers $\rho_i^c$ and $\rho_i^n$, respectively). In addition, to further
increase Eve's ambiguity, users jointly set $(\rho_i^c+\rho_i^n)g_{ei}$
to the same value $\rho_r$
(assuming the channel knowledge at both users). The following result is readily available.

\begin{prop}\label{prop:halfduplexgaussian}
The set of achievable rates for the half-duplex Gaussian two-way
wiretap channel ${\cal R}^{HG}$ is given by,
\begin{center}
${\cal R}^{HG} \triangleq$ closure of the convex hull of
$\left\{\underset{P\in{\cal P}^{HG}}{\bigcup} {\cal R}^{HG}(P)\right\}$
\end{center}
where ${\cal P}^{HG}$ is defined as,
\begin{eqnarray*}
{\cal P}^{HG} \triangleq &&\{(\rho_1^c,\rho_1^n,\rho_2^c,\rho_2^n,P_1,P_2):
0 \leq P_1,P_2 \leq 1, (\rho_1^c+\rho_1^n) g_{e1}=(\rho_2^c+\rho_2^n) g_{e2}=\rho_r, \notag\\
&&P_1(\rho_1^c+\rho_1^n) \leq \rho_1, \notag
P_2(\rho_2^c+\rho_2^n) \leq \rho_2 \},
\end{eqnarray*}
and ${\cal R}^{HG}(P)$ is the closure of all non-negative
rate tuples $(R_1,R_2)$ satisfying

\begin{equation*}
R_1 \leq P_1(1-P_2)\gamma\left(\frac{\rho_1^c}
{1+\rho_1^n }\right)
\end{equation*}
\begin{equation*}
R_2 \leq P_2(1-P_1)\gamma\left(\frac{\rho_2^c}{1+\rho_2^n}\right)
\end{equation*}
\begin{eqnarray*}
R_1+R_2\leq P_1(1-P_2)\gamma\left(\frac{\rho_1^c }{1+\rho_1^n }\right)
+P_2(1-P_1)\gamma\left(\frac{\rho_2^c }{1+\rho_2^n }\right)
+ h(Z|C_1,C_2)-h(Z),
\end{eqnarray*}
where
\begin{eqnarray*}
h(Z)-h(Z|C_1,C_2)&=&P_1P_2\gamma\left(\frac{\rho_1^c g_{e1}+\rho_2^c g_{e2}}{1+\rho_1^n g_{e1}+\rho_2^n g_{e2}}\right)
+(P_1(1-P_2)+P_2(1-P_1))\frac{1}{2}\log(2\pi e(1+\rho_r)) \notag\\
&&\:{-} (P_1(1-P_2)+P_2(1-P_1)) \int_{j=-\infty}^{\infty} \int_{i=-\infty}^{\infty} f_{C_1}(i) f_{C_2}(j) h(Z|i,j)d f_{C_1}d f_{C_2},\notag\\
\end{eqnarray*}
and
\begin{eqnarray*}
f_{Z|C_1,C_2}(z|i,j)&=&d_1 f(z;i,1+\rho_1^n g_{e1}) + d_2 f(z;j,1+\rho_2^n g_{e2}),\\
d_1&=&\frac{P_1(1-P_2)}{P_1(1-P_2)+P_2(1-P_1)},\\
d_2&=&1-d_1,
\end{eqnarray*}
and $f(x;\mu,\sigma^2)$ is the value at $x$ of the probability density function of a Gaussian random variable with mean $\mu$ and variance $\sigma^2$. 
\end{prop}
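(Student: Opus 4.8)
The plan is to derive $\mathcal{R}^{HG}$ as the continuous-alphabet specialization of the randomized-scheduling region $\mathcal{R}^{H}$ of Corollary~\ref{thm:HalfDuplexRandomScheduling}, in exactly the way Corollary~\ref{cor:fullduplexgaussian} specializes Theorem~\ref{thm:fullduplexdmc}; the extension of the codebook construction together with the probability-of-error and equivocation analyses to Gaussian inputs is routine and can be imported from~\cite{GGMAC-WT,GGMAC-WT-Correction}. First I would fix the auxiliary distribution: set $|\mathcal{Q}|=1$, let $C_i\sim\mathcal{N}(0,\rho_i^c)$ and the channel-prefix (jamming) noise $\bar{N}_i\sim\mathcal{N}(0,\rho_i^n)$ be mutually independent, and put $X_i=C_i+\bar{N}_i$ on the symbols where user $i$ transmits. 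Since user $i$ transmits with probability $P_i$, the weak law of large numbers gives $\frac1n\sum_{t=1}^n (X_i(t))^2\to P_i(\rho_i^c+\rho_i^n)$, so the average-power constraint holds exactly when $P_i(\rho_i^c+\rho_i^n)\le\rho_i$, which is the defining condition of $\mathcal{P}^{HG}$.

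Next I would evaluate the information quantities of Corollary~\ref{thm:HalfDuplexRandomScheduling} under this choice. In state $1$ user $2$ is silent, so $Y_2=C_1+\bar{N}_1+N_2$ and $I(C_1;Y_2|X_2,Q,\text{state }1)=\gamma\!\left(\rho_1^c/(1+\rho_1^n)\right)$; the state-$2$ term is symmetric. In state $3$ both users transmit and $Z=\sqrt{g_{e1}}X_1+\sqrt{g_{e2}}X_2+N_e$ is Gaussian, so $I(C_1,C_2;Z|Q,\text{state }3)=\gamma\!\left(\frac{g_{e1}\rho_1^c+g_{e2}\rho_2^c}{1+g_{e1}\rho_1^n+g_{e2}\rho_2^n}\right)$. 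Substituting these three expressions into the single-rate bounds and into the corresponding parts of the sum-rate bound reproduces all of Proposition~\ref{prop:halfduplexgaussian} except the contribution $(P_1(1-P_2)+(1-P_1)P_2)\,I(C_1,C_2;Z|Q,\text{state 1 or 2})$.

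The heart of the argument, and the step I expect to be the main obstacle, is $I(C_1,C_2;Z|Q,\text{state 1 or 2})=h(Z|Q,\text{state 1 or 2})-h(Z|C_1,C_2,Q,\text{state 1 or 2})$. Here the power-matching constraint $(\rho_1^c+\rho_1^n)g_{e1}=(\rho_2^c+\rho_2^n)g_{e2}=\rho_r$ built into $\mathcal{P}^{HG}$ is essential: in state $1$, $Z=\sqrt{g_{e1}}(C_1+\bar{N}_1)+N_e\sim\mathcal{N}(0,1+\rho_r)$, and in state $2$, $Z=\sqrt{g_{e2}}(C_2+\bar{N}_2)+N_e\sim\mathcal{N}(0,1+\rho_r)$, so the marginal of $Z$ given the merged event ``state $1$ or $2$'' is a mixture of two identical zero-mean Gaussians, i.e.\ itself $\mathcal{N}(0,1+\rho_r)$, whence $h(Z|Q,\text{state 1 or 2})=\frac12\log(2\pi e(1+\rho_r))$. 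Conditioning additionally on $C_1=i,C_2=j$, however, leaves a genuine Gaussian \emph{mixture} $f_{Z|C_1,C_2}(z|i,j)=d_1 f(z;i,1+\rho_1^n g_{e1})+d_2 f(z;j,1+\rho_2^n g_{e2})$, whose weights $d_1=\frac{P_1(1-P_2)}{P_1(1-P_2)+(1-P_1)P_2}$ and $d_2=1-d_1$ are the posterior probabilities of states $1$ and $2$ given ``state $1$ or $2$''; this mixture has no closed-form differential entropy, so $h(Z|C_1,C_2,Q,\text{state 1 or 2})$ must remain the iterated integral $\int\!\!\int f_{C_1}(i)f_{C_2}(j)\,h(Z|i,j)\,df_{C_1}\,df_{C_2}$. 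The point requiring care is that Corollary~\ref{thm:HalfDuplexRandomScheduling} conditions on ``state $1$ or $2$'' jointly rather than on the two states separately, which is legitimate because in that scheme Eve is told only the locations of the state-$3$ and state-$4$ symbols, so states $1$ and $2$ are indistinguishable to her. Finally, collecting the state-$3$ and merged-state Eve contributions into the quantity named $h(Z)-h(Z|C_1,C_2)$, then taking the union over $\mathcal{P}^{HG}$ and the closure of the convex hull, delivers $\mathcal{R}^{HG}$.
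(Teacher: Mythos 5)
Your proposal is correct and follows essentially the route the paper intends: the paper offers no separate proof for this proposition, stating only that it is ``readily available'' as the Gaussian specialization of Corollary~\ref{thm:HalfDuplexRandomScheduling} (with $|\Qc|=1$, Gaussian codewords and prefix noise, and the convex-hull/continuous-alphabet extension imported from~\cite{GGMAC-WT,GGMAC-WT-Correction}), exactly as you do. Your evaluation of the state-wise mutual information terms, and in particular your identification of the power-matching constraint $(\rho_i^c+\rho_i^n)g_{ei}=\rho_r$ as the reason $h(Z|\text{state 1 or 2})$ collapses to $\tfrac12\log(2\pi e(1+\rho_r))$ while the conditional density remains a two-component Gaussian mixture with weights $d_1,d_2$, matches the structure of the stated result and parallels the paper's Appendix~\ref{prf:halfduplexmodulo} treatment of the modulo-$2$ case.
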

We remark that the ambiguity at Eve can be further increased by randomizing
the transmit power levels at the expense of more receiver complexity
(due to the non-coherent nature of the transmissions). We implemented
this randomization idea in the next section, where the complexity issue is resolved by using energy classifiers.

\section{Randomization for Secrecy: Practical Implementation}\label{experimental}
In this section, we study a more practical half-duplex Gaussian setting where the {\em constant} channel coefficients are determined by the distance-based path losses in a $2$-D geometric model. Our focus will be devoted to the symmetric case where the
two messages have the same rate. Without any loss of generality,
Alice and Bob are assumed to be located on the $x$-axis at opposite
ends of the origin and Eve is assumed to be located {\bf outside}
a circle centered around the origin of radius $r_E$ at an
angle $\theta$ of the $x$-axis (see Figure~\ref{fig:BANmodel}). This key assumption
faithfully models the spatial separation, between the
legitimate nodes and eavesdropper(s), which characterizes near field wireless networks like Body Area Networks (BAN) [see e.g.~\cite{BAN}]. The performance of the proposed secure randomized scheduling
communication scheme will be obtained as a function of $r_E$ and
the distance between Alice and Bob, i.e., $d_{AB}$. In the
discrete-time model, the signals received by the three nodes in the $t^{th}$
symbol interval are given by
\begin{eqnarray*}
Y_1(t)&=&{\bf 1}_{\{X_1(t),0\}}\nonumber
\left[G_A (d_{AA}^{-\alpha/2}X_1(t)e^{-jkd_{AA}}
+d_{AB}^{-\alpha/2}X_2(t)e^{-jkd_{AB}})+N_1(t)\right]\nonumber\\
Y_2(t)&=&{\bf 1}_{\{X_2(t),0\}}\nonumber
\left[G_B (d_{AB}^{-\alpha/2}X_1(t)e^{-jkd_{AB}}
+d_{BB}^{-\alpha/2}X_2(t)e^{-jkd_{BB}})+N_2(t)\right]\nonumber\\
Z(t)&=&G_E(d_{AE}^{-\alpha/2}X_1(t)e^{-jkd_{AE}}+
d_{BE}^{-\alpha/2}X_2(t)e^{-jkd_{BE}})\nonumber +N_e(t)\nonumber,
\end{eqnarray*}
where {\em k} is the wave number, $G_A$, $G_B$ and $G_E$ are
propagation constants which depend on the receive antenna gains,
and $\alpha$ is the path loss exponent which will be taken to $2$
as in the free space propagation scenario. (One can easily extend
our results for other scenarios with different path loss exponents.)
For further simplicity, we restrict ourselves to binary encoding implying that $X_1(t)\in\left\{-\sqrt{\rho(t)},0,\sqrt{\rho(t)}\right\}$, where
$\rho(t)$ is the instantaneous signal to noise ratio at unit distance
in the $t^{th}$ symbol interval if Alice decides to transmit.
$X_1(t)=0$ if Alice decides not to transmit. The same applies to $X_2(t)$.
$\rho(t)$ is selected randomly in the range $[\rho_{\textrm{min}},\rho_{\textrm{max}}]$,
by varying the transmit power,
according to a distribution that is known {\em a priori} to all nodes. The indicator function ${\bf 1}_{\{x,y\}}$ is defined as in Section~\ref{half_duplex}. In order to ensure the robustness of our results, we assume
that Eve employs a large enough receive antenna, i.e., $G_E>>1$,
such that her receiver has a high enough SNR and the additive noise effect in $Z$ can be ignored. We assume $G_A=G_B=1$, and a hard decision decoder at both the legitimate
receiver(s) and the eavesdropper. We consider a {\em memoryless}
classifier ${\cal C}$ used by Eve to
identify the origin of each received symbol, i.e., the decision
is based only on the power level of the observed symbol in the
current time interval.
Here, {\em $P_{m}$} and {\em $P_{f}$} represent the probability
of miss detection and false alarm, respectively. Furthermore,
we use $P_{e|m}$ to denote the probability of symbol error
given occurrence of the miss detection event.
Finally, we use the following notation:
$\phi(x) \triangleq \int\limits_{-\infty}^{x}\frac{1}
{\sqrt {2\pi}}e^{\frac{-t^{2}}{2}}dt$.

The deterministic scheduling paradigm is represented by a {\bf Time Division Multiplexing} scheme
whereby only a single message is transmitted in any given time
frame, and the legitimate receiver {\em jams} the channel with
random-content feedback symbols at random time intervals. More specifically, the
receiver will transmit a feedback symbol at any time interval
with probability $\beta$. This feedback
will result in erroneous outputs at the eavesdropper due to
its inability to identify the symbols corrupted by the random
feedback signal and erasures at the legitimate receiver due to the half-duplex constraint. As argued in~\cite{Lai2008}, this scheme is
capable of completely impairing Eve in modulo-additive channels.
In our {\em real-valued} channel, however, a simple energy
classifier based on the average received signal power~\cite{rssi-under}
can be used by Eve to differentiate between corrupted and
{\em non-jammed} symbols. To overcome this problem, we use
pre-determined distributions for the transmit power of both
the data symbols, {\bf $f_{1}$}, and feedback symbols
{\bf $f_{2}$}. This randomized power allocation strategy is
intended to increase the probability of {\em misclassification}
at Eve. The following result characterizes the achievable rate with this scheme.
\begin{thm}\label{thm:tdm}
Using the proposed TDM protocol with randomized feedback and power allocation,
the following secrecy rate is achievable at each user.
\begin{eqnarray*}
R_{s} = 0.5\max_{\beta,f_{1},f_{2}}
\left\{\min_{\theta,\cal{C}}\left\{[R_{M}-R_{E}]^{+}\right\}\right\},
\end{eqnarray*}
where
\begin{eqnarray*}
R_{M} = (1 - \beta)\left(1 - H\left(1 -
\phi\left(\sqrt \frac{\rho_{\textrm{min}}}{{d_{AB}}^{\alpha}}\right)\right)\right)
\end{eqnarray*}
\begin{eqnarray*}
R_{E} = \left(1 - \beta\left(1 - P_{m}\right) -
(1- \beta)P_{f}\right)\nonumber
\left(1 - H\left(\frac{\beta P_{m}P_{e|m}}
{1 - \beta\left(1 -P_{m}\right) - (1- \beta)P_{f}}\right)\right)
\end{eqnarray*}
\end{thm}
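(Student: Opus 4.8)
The plan is to realize the TDM protocol with randomized feedback and power allocation as a one‑way Wyner wiretap code over an \emph{effective} pair of channels, and then invoke the standard wiretap coding and equivocation argument (as in~\cite{wyner}, \cite{Lai2008}, and~\cite[Lemma~15]{Koyluoglu:Cooperative}). Since in any given frame only one message is active, it suffices to analyze a single direction, say Alice\,$\to$\,Bob over a block of $n$ channel uses during which Bob injects an independent, random‑content feedback symbol in each interval with probability $\beta$ (with transmit power drawn i.i.d.\ from $f_{2}$), while Alice's data symbols carry power drawn i.i.d.\ from $f_{1}$; the leading factor $0.5$ then comes from interleaving Alice's and Bob's frames in a time‑division manner, so that each user is active for half of the total channel uses.

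First I would characterize the effective main channel Alice\,$\to$\,Bob. By the half‑duplex constraint, an interval in which Bob feeds back (probability $\beta$) is an \emph{erasure} whose location is known to Bob; in the remaining fraction $1-\beta$ of intervals Bob observes Alice's binary symbol through a Gaussian link whose SNR is at least $\rho_{\textrm{min}}/d_{AB}^{\alpha}$. Hard‑decision demodulation then yields, conditioned on no erasure, a BSC whose crossover probability is at most $1-\phi\!\left(\sqrt{\rho_{\textrm{min}}/d_{AB}^{\alpha}}\right)$, uniformly over the realized power levels; hence $\tfrac1n I(X_1^{n};Y_2^{n},\textrm{erasure pattern})\ge R_{M}$ and Alice can reliably convey any rate strictly below $R_{M}$.

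Next I would upper bound what Eve can learn. Under the assumptions $G_{E}\gg 1$ (so Eve's own additive noise is negligible) and a \emph{memoryless} energy classifier ${\cal C}$ with a hard‑decision demodulator, each of Eve's received symbols is labelled ``clean'' versus ``jammed'' on the basis of its power alone: a feedback interval is missed with probability $P_{m}$ and a clean interval is falsely flagged with probability $P_{f}$, so Eve retains a fraction $n_{E}=\beta P_{m}+(1-\beta)(1-P_{f})=1-\beta(1-P_{m})-(1-\beta)P_{f}$ of the intervals, of which the sub‑fraction $\beta P_{m}/n_{E}$ actually carries a feedback‑corrupted symbol that she demodulates in error with probability $P_{e|m}$. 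Because the feedback content and power are independent of Alice's message, the discarded intervals (and the cross‑term corruptions) are asymptotically independent of $W$, so the retained stream is dominated by a BSC with crossover $p_{E}=\beta P_{m}P_{e|m}/n_{E}$ used a fraction $n_{E}$ of the time; hence $\tfrac1n I(W;Z^{n})\le R_{E}(1+o(1))$, and this bound is taken for the location $\theta$ and classifier ${\cal C}$ that are most favorable to Eve, which is exactly the inner $\min_{\theta,{\cal C}}$ in the statement. Combining this with the main‑channel bound and a Wyner‑type coset/binning code whose binning rate slightly exceeds $R_{E}$ gives vanishing error at the legitimate receiver and $\tfrac1n H(W|Z^{n})\ge \tfrac1n\log M-\epsilon$, i.e., a secure rate $[R_{M}-R_{E}]^{+}$ per active frame; optimizing the design parameters $\beta,f_{1},f_{2}$ (which Alice and Bob are free to choose and announce) and halving for the time‑sharing across the two directions yields the claimed $R_{s}$.

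The main obstacle I anticipate is the converse‑flavored step embedded in this achievability argument: rigorously showing that \emph{no} memoryless energy classifier does better than the BSC‑over‑$n_{E}$‑fraction model — in particular that the symbols Eve discards, as well as the residual cross‑terms in the misclassified ones, are asymptotically useless given the randomized feedback content and randomized power — and that the per‑letter classifier performance $(P_{m},P_{f},P_{e|m})$ aggregates into the block quantity $\tfrac1n I(W;Z^{n})$ additively up to $o(1)$. A secondary, more bookkeeping‑type difficulty is to express $P_{m}$, $P_{f}$, and $P_{e|m}$ explicitly as functions of the geometry (the distances $d_{AE},d_{BE}$ induced by $r_{E}$ and $\theta$) and of $f_{1},f_{2}$, and to verify that the worst case over Eve's admissible positions is indeed what the outer $\min_{\theta}$ captures.
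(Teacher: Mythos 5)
Your proposal follows essentially the same route as the paper's proof: both model the Alice$\to$Bob link and the post-classifier link to Eve as erasure-plus-BSC channels with parameters $(\alpha_M,P_e^{(M)})=(\beta,\,1-\phi(\sqrt{\rho(t)/d_{AB}^{\alpha}}))$ and $(\alpha_E,P_e^{(E)})=(\beta(1-P_m)+(1-\beta)P_f,\;\beta P_m P_{e|m}/(1-\alpha_E))$, invoke the Wyner/Csisz\'ar--K\"orner rate $[\max_{P(x)}I(X;Y)-\max_{P(x)}I(X;Z)]^{+}$ with the uniform input, and conclude with the max-min over $(\beta,f_1,f_2)$ and $(\theta,\mathcal{C})$ and the factor $0.5$ for time division. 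The ``converse-flavored'' obstacle you flag is dispensed with in the paper by the standing assumption that Eve is restricted to a memoryless energy classifier with hard-decision decoding, so her effective observation is by definition the post-classifier symbol stream rather than the raw received waveform.
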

\begin{IEEEproof}
Please refer to Appendix~\ref{prf:tdm}.
\end{IEEEproof}

In the randomized scheduling approach, each node will transmit its message during randomly
selected time intervals, where a single node's transmitter is
active in any given time interval with probability $P_{t}$,
and the transmit power level is randomly selected according to
the distribution $f$. Consequently, there are four possible
states of both transmitters in any particular time interval $i$.
Due to our noiseless assumption, the eavesdropper's antenna
will easily identify {\em silence} intervals. Eve's challenge, however, is to differentiate between the
other three states. Let $A$ and $B$ represent the transmission event of Alice
and Bob, respectively. Similarly, $A^{c}$ and $B^{c}$ are the complementary events. Finally, we let
$E_1 \rightarrow E_2$ to denote the occurrence of
event $E_1$ and its classification by Eve as event $E_2$, and denote the probability of error given that
the event $(A,B)$ was mistaken for $(A,B^c)$ by the classifier
as $P_{e|(A,B)\rightarrow(A,B^{c})}$. The following is the achievable secrecy rate with the two-way randomization approach.
\begin{thm}\label{thm:interactive}
Using the \emph{two-way randomized scheduling and power
allocation protocol}, the following secrecy rate 
is achievable at each user.
\begin{equation*}
R_{s}=\max_{P_{t},f}(\min_{\theta,\cal C}([R_{M}-\max(R_{EA},R_{EB})]^{+})),
\end{equation*}
where
\begin{equation*}\label{eqn:rminteractive}
R_{M} = P_{t}\left(1 - P_{t}\right)\left(1 - H\left(1 - \phi\left(\sqrt \frac{\rho_{\textrm{min}}}{{d_{AB}}^{\alpha}}\right)\right)\right)
\end{equation*}
\begin{equation*}\label{eqn:rea}
R_{EA} = D_{A} \left(1 - H\left(\frac{P_{e}^{(EA)}}{D_{A}}\right)\right)
\end{equation*}
\begin{equation*}\label{eqn:reb}
R_{EB} = D_{B} \left(1 - H\left(\frac{P_{e}^{(EB)}}{D_{B}}\right)\right)
\end{equation*}
$\\D_{A} = P_{t}^{2}P_{(A,B)\rightarrow(A,B^{c})}+ P_{t}\left(1 - P_{t}\right)P_{(A^{c},B)\rightarrow(A,B^{c})}+ P_{t}\left(1 - P_{t}\right)\left(1 - P_{(A,B^{c})\rightarrow(A^{c},B)} - P_{(A,B^{c})\rightarrow(A,B)}\right)\\$
$\\D_{B} = P_{t}^{2}P_{(A,B)\rightarrow(A^{c},B)}+ P_{t}\left(1 - P_{t}\right)P_{(A,B^{c})\rightarrow(A^{c},B)}+ P_{t}\left(1 - P_{t}\right)\left(1 - P_{(A^{c},B)\rightarrow(A,B^{c})} - P_{(A^{c},B)\rightarrow(A,B)}\right)\\$
$\\P_{e}^{(EA)} = P_{t}^{2}P_{(A,B)\rightarrow(A,B^{c})}P_{e|(A,B)\rightarrow(A,B^{c})}+ 0.5P_{t}\left(1 - P_{t}\right)P_{(A^{c},B)\rightarrow(A,B^{c})}\\$
$\\P_{e}^{(EB)} = P_{t}^{2}P_{(A,B)\rightarrow(A^{c},B)}P_{e|(A,B)\rightarrow(A^{c},B)}+0.5P_{t}\left(1 - P_{t}\right)P_{(A,B^{c})\rightarrow(A^{c},B)}\\$
and $D_{A}$, $D_{B}$ represent the portion of symbols classified by Eve as being transmitted by Alice or Bob respectively.
\end{thm}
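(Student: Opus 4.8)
The plan is to reuse the proof architecture of Corollary~\ref{thm:HalfDuplexRandomScheduling} and Theorem~\ref{thm:tdm}: wrap a wiretap binning code at each legitimate node around the randomized scheduling/power primitive, and then replace the state information that Eve is \emph{handed} in Corollary~\ref{thm:HalfDuplexRandomScheduling} by the output of her memoryless classifier ${\cal C}$, which errs and therefore leaks strictly less. Concretely, I would let Alice and Bob share a schedule --- in each slot node $i$ is active independently with probability $P_t$ --- together with the i.i.d.\ power sequence drawn from $f$, keeping both secret from Eve. A law-of-large-numbers argument concentrates the fractions of slots in the four states at $P_t(1-P_t)$ (only Alice), $P_t(1-P_t)$ (only Bob), $P_t^2$ (collision), and $(1-P_t)^2$ (silence). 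Since every node detects the silence state and a legitimate receiver knows its own schedule, the Alice$\to$Bob link collapses to an erasure/BSC hybrid: a usable binary-symmetric channel on the $\approx nP_t(1-P_t)$ ``only Alice'' slots --- with crossover at most $1-\phi(\sqrt{\rho_{\textrm{min}}/d_{AB}^\alpha})$, since the worst-case power $\rho_{\textrm{min}}$ minimizes the received BPSK SNR $\rho/d_{AB}^\alpha$ --- and erasures elsewhere. The channel coding theorem then yields reliable decoding at Bob up to rate $R_M$, and by symmetry the same at Alice, which disposes of $P_{e,1},P_{e,2}\to0$. Unlike the TDM scheme of Theorem~\ref{thm:tdm}, both messages ride the medium simultaneously, so no $\tfrac{1}{2}$ appears; the scheduling loss already sits inside the $P_t(1-P_t)$ factor of $R_M$.

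For the secrecy requirement $\tfrac{1}{n} I(W_1,W_2;{\bf Z})\to0$, fix Eve's angle $\theta$ and classifier ${\cal C}$; the $\min$ over this pair is taken only at the very end. The key step is to exhibit, for each user, a single-letter ``leakage channel'' from its codeword to Eve's processed observation. Partitioning the block by ${\cal C}$'s per-slot decision, the positions Eve ascribes to Alice occupy a fraction $\to D_A$, whose three summands account for collision slots misread as Alice-only, Bob-only slots misread as Alice-only, and genuine Alice-only slots correctly retained; on this sub-block the map from Alice's transmitted bit to Eve's hard decision is a BSC with crossover $P_e^{(EA)}/D_A$, the two error contributions being the interference-induced errors on misread collisions (probability $P_{e|(A,B)\to(A,B^{c})}$) and the pure $\tfrac{1}{2}$-errors on misread Bob-only slots, where Bob's codeword bit behaves as a fair coin relative to $W_1$. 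Because ${\cal C}$ is memoryless and the schedule/power are i.i.d., this leakage single-letterizes and $\tfrac{1}{n} I(W_1;{\bf Z})\le R_{EA}=D_A(1-H(P_e^{(EA)}/D_A))$, and symmetrically $\tfrac{1}{n} I(W_2;{\bf Z})\le R_{EB}$. A standard Wyner-type equivocation computation (cf.\ the bound recorded right after Definition~\ref{eqn:achievable}) then shows that it suffices for each wiretap code to carry randomization rate $\max(R_{EA},R_{EB})$: since the symmetric setting forces a common message rate, we need both $R_s\le R_M-R_{EA}$ and $R_s\le R_M-R_{EB}$, i.e.\ $R_s\le R_M-\max(R_{EA},R_{EB})$. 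Inserting $[\,\cdot\,]^{+}$, then minimizing over $(\theta,{\cal C})$ (robustness to Eve's location and to any memoryless classifier) and maximizing over the design pair $(P_t,f)$ gives the stated $R_s$.

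The step I expect to be the main obstacle is upgrading the two marginal leakage bounds into the \emph{joint} statement $\tfrac{1}{n} I(W_1,W_2;{\bf Z})\to0$. This forces the usual argument that, given $(W_1,W_2)$, Eve can still resolve the actual codewords inside their bins from ${\bf Z}$, hence a check that the common randomization rate $\max(R_{EA},R_{EB})$ does not exceed what Eve's classifier-processed multiple-access observation can support --- and, more delicately, that at the resolution of Eve's decoder each user's signal genuinely acts as independent randomness for the other user's message (precisely the role of the $\tfrac{1}{2}$-terms in $P_e^{(EA)}$ and $P_e^{(EB)}$), so that the independence of the two codebooks lets the marginal guarantees combine. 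The remaining work --- writing $d_{AE},d_{BE}$ in terms of $r_E$, $\theta$, and $d_{AB}$, and carrying the power law $f$ through every misclassification probability to obtain $D_A$, $D_B$, $P_e^{(EA)}$, $P_e^{(EB)}$ in closed form --- is routine geometric bookkeeping once the effective-channel picture is in place.
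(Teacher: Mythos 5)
Your proposal follows essentially the same route as the paper's proof: both reduce the analysis to the per-user erasure/BSC picture of Theorem~\ref{thm:tdm} (with $\alpha_M = 1-P_t(1-P_t)$, $\alpha_E = 1-D_A$, $P_e^{(E)} = P_e^{(EA)}/D_A$ built from the same three classification categories and the same $\tfrac12$-error term for misattributed symbols), then take the minimum of the two per-user rates for symmetric communication and apply the max-min over $(P_t,f)$ and $(\theta,\mathcal{C})$. The joint-leakage issue you flag as the main obstacle is in fact glossed over by the paper as well, which only argues each message's secrecy separately, so your write-up is, if anything, more candid about that step.
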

\begin{IEEEproof}
Please refer to Appendix~\ref{prf:interactive}.
\end{IEEEproof}
One can argue that the achievable secrecy rate increases as $r_E$ increases. The reason is that a large $r_E$ will impair Eve's ability to differentiate between the symbols transmitted by Bob and Alice. The following result characterizes the secrecy rate achievable in the asymptotic scenario when $r_{E} >> d_{AB}$.

\begin{cor}\label{cor:maxinteractive}
Let $R_{\textrm{max}}$ be the achievable
secrecy rate \emph{using the randomized scheduling and power allocation scheme} when $r_E\rightarrow\infty$.
Then,
\begin{equation}\label{eqn:upperbound}
R_{\textrm{max}} = \max_{P_{t}}([R_{M}-(1-(1-P_{t})^{2})(1-H(0.25))]^{+}),
\end{equation}
where
\begin{equation*}
R_{M} = P_{t}\left(1 - P_{t}\right)\left(1 - H\left(1 - \phi\left(\sqrt \frac{\rho_{\textrm{min}}}{{d_{AB}}^{\alpha}}\right)\right)\right)
\end{equation*}
\end{cor}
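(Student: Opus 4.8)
The plan is to take the limit $r_E\to\infty$ inside the achievable rate of Theorem~\ref{thm:interactive}, $R_s=\max_{P_t,f}\min_{\theta,{\cal C}}[R_M-\max(R_{EA},R_{EB})]^+$, and to show that every optimization except the one over $P_t$ becomes inactive. First I would isolate $R_M$: it depends only on $P_t$, $\rho_{\textrm{min}}$, $d_{AB}$ and $\alpha$, so it is unaffected by $r_E$, $\theta$, ${\cal C}$ or $f$ and can be pulled outside all of the optimizations. Since $R_M$ is constant, $\min_{{\cal C}}[R_M-\max(R_{EA},R_{EB})]^+=[R_M-\max_{{\cal C}}\max(R_{EA},R_{EB})]^+$, so it remains to compute $\lim_{r_E\to\infty}\max_{\theta,{\cal C}}\max(R_{EA},R_{EB})$.

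The key ingredient is the geometric fact announced just before the statement: for any fixed $\theta$, $d_{AE}/r_E\to1$ and $d_{BE}/r_E\to1$ as $r_E\to\infty$, so the received power at Eve from ``only Alice transmits'' and from ``only Bob transmits'' become identically distributed. Because Eve's classifier is memoryless and power-based, this forces in the limit the symmetry relations $P_{(A,B^c)\to s}=P_{(A^c,B)\to\bar s}$ (with $\bar s$ the state obtained by swapping $A$ and $B$), and it makes these probabilities independent of $\theta$. I would then evaluate $D_A$ and $P_e^{(EA)}$ from Theorem~\ref{thm:interactive} for the classifier most favourable to Eve, namely the degenerate one that labels every non-silent symbol as state $1$: this sets $P_{(A^c,B)\to(A,B^c)}=P_{(A,B)\to(A,B^c)}=1$ and $P_{(A,B^c)\to(A^c,B)}=P_{(A,B^c)\to(A,B)}=0$, hence $D_A=P_t^2+2P_t(1-P_t)=1-(1-P_t)^2$. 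A genuinely type-$(A^c,B)$ symbol lumped into this pool carries bit error $\tfrac12$ (Bob's bit is independent of Alice's message), while a type-$(A,B)$ symbol contributes the limiting value $P_{e|(A,B)\to(A,B^c)}\to\tfrac14$, so $P_e^{(EA)}\to\tfrac14 P_t(2-P_t)$ and $P_e^{(EA)}/D_A\to\tfrac14$. Thus $R_{EA}\to(1-(1-P_t)^2)\,(1-H(\tfrac14))$; by the $A\leftrightarrow B$ symmetry $R_{EB}$ obeys the same bound for every classifier, so $\lim_{r_E\to\infty}\max_{\theta,{\cal C}}\max(R_{EA},R_{EB})=(1-(1-P_t)^2)(1-H(0.25))$, with the $\theta$-optimization inert.

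Combining the two steps gives $\lim_{r_E\to\infty}\min_{\theta,{\cal C}}[R_M-\max(R_{EA},R_{EB})]^+=[R_M-(1-(1-P_t)^2)(1-H(0.25))]^+$ with $R_M=P_t(1-P_t)(1-H(1-\phi(\sqrt{\rho_{\textrm{min}}/d_{AB}^{\alpha}})))$. This limiting expression no longer involves $f$, so the outer $\max_{P_t,f}$ collapses to $\max_{P_t}$, which is exactly \eqref{eqn:upperbound}.

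The hard part will be the second step, in two respects. First, one must show that no classifier beats the degenerate ``everything is Alice'' rule in the limit; this should follow because any symbol Eve attributes to a fixed user carries a pooled bit error of at least $\tfrac14$ — the two single-transmitter states are indistinguishable and hence enter such a pool in equal proportion, contributing $\tfrac12\cdot0+\tfrac12\cdot\tfrac12$, while the randomized power allocation keeps the two-transmitter contribution at least $\tfrac14$ — so that $R_{EA}=D_A(1-H(\textrm{pool error}))$ is monotone in $D_A$ and maximized by making $D_A$ as large as possible. Second, and more delicate, one must verify that $P_{e|(A,B)\to(A,B^c)}$ really tends to exactly $\tfrac14$ and does so uniformly in $\theta$ and $f$, so that the $\min_\theta$ cannot select an angle more favourable to Eve; this is the one genuine computation, and it is where the combination of the large-$r_E$ limit and the randomized transmit powers is used. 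The remaining optimization over how much mass to place in $D_A$ is then a one-line monotonicity argument.
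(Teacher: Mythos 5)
Your proposal is correct and follows essentially the same route as the paper's proof: in the limit $r_E\to\infty$ the two single-transmitter states become indistinguishable at Eve, every pool of symbols she attributes to one user carries bit error exactly $1/4$, and the degenerate classifier labelling every non-silent symbol as coming from a single user maximizes $D_A=1-(1-P_t)^2$, yielding \eqref{eqn:upperbound}. The only difference is that you spell out the monotonicity argument for why no classifier beats this degenerate rule, a point the paper's appendix asserts rather than proves.
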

\begin{IEEEproof}
Please refer to Appendix~\ref{prf:maxinteractive}.
\end{IEEEproof}

\subsection{Numerical Results}
In our numerical examples, we assume a uniform power distribution
for both Alice and Bob, and a threshold-based energy classifier is
used by Eve. Because we assume that all channels are noiseless, Eve can successfully decode
the received symbols, corresponding to concurrent transmissions, as
the symbols with the higher received signal power.  Also, the received signal powers in all transmission scenarios are known a priori, where a transmission scenario is defined by the set of active transmitters and the selected power levels. Based on the received signal power, the transmission scenario is detected by Eve, and hence the set of active transmitters. In case two or more transmission scenarios result in the same received signal power, a random choice is made with equal probabilities given to all possible scenarios. To simplify the
calculations, we further assume that Alice and Bob use sufficient error
control coding to overcome the additive noise effect. More precisely,
Alice and Bob are assumed to use asymptotically optimal forward error
control coding and that their received SNR is above the minimal level
required to achieve arbitrarily vanishing probability of error.

Fig.$~\ref{fig:numerical}$ reports the achievable secrecy
rate $R_{s}$ of Theorems~\ref{thm:tdm} and~\ref{thm:interactive}
at different values for the distance ratio $\frac{d_{\textrm{min}}}
{d_{\textrm{max}}}$ ($d_{\textrm{min}}$ = $\min$($d_{AE},d_{BE})$,
$d_{\textrm{max}}$ = $\max$($d_{AE},d_{BE}$)).
A few remarks are now in order.
\begin{enumerate}
\item The two-way randomization scheme achieves
higher rates than the TDM scheme. The reason is the added ambiguity
at Eve resulting from the randomization in the scheduling algorithm.

\item The lower secrecy rates for smaller values of
$\frac{d_{\textrm{min}}}{d_{\textrm{max}}}$ is due to
Eve's enhanced ability to capture the symbols transmitted by the node closer to her.

\item The rates plotted in Fig.~$\ref{fig:numerical}$ were
found to be very close to those of a classifier that does not
erase any received symbols, i.e., transmission scenarios corresponding to concurrent transmissions are not considered.
\end{enumerate}

\subsection{Experimental Results}
We implemented our experiments on TinyOS \cite{tinyos-original} using TelosB motes
\cite{telos-datasheet}, which have a built-in CC2420 radio module
\cite{cc2420-datasheet}. The CC2420 module uses the IEEE
802.15.4 standards in the $2.4$~GHZ band ~\cite{802154-specifications}.
Our setup consists of four nodes, equivalent to
Alice, Bob, Eve, and a Gateway module. The Gateway acts as a link
between the sensor network and a PC running a java program.
Our experiment is divided into cycles. During each cycle,
the PC works as an orchestrator, {\em through the Gateway},
that determines, using a special message ({\em TRIGGER-MSG}), whether
Alice should send alone, Bob sends alone, or both send
concurrently. It also determines the power level used for transmission.
These decisions are based on the transmission probability $P_{t}$.
Upon receiving the broadcast TRIGGER-MSG, each trusted node
transmits a {\em DATA-MSG} while Eve will start to continuously read the value
in the Received Signal Strength Indicator (RSSI) register (the
RSSI value read by the CC2420 module is a moving average
of the last $8$ received symbols~\cite{cc2420-datasheet}.). Eve
then transfers the RSSI readings from the memory buffer to
the Gateway node which will forward them to the PC in an {\em RSSI-MSG}.
For each cycle, the java program stores the received RSSI readings for
further processing by the energy classifier (implemented in MATLAB).
When transmitting data messages (\emph{DATA-MSG}) from Alice or Bob,
each node constructs a random payload of 100 bytes using
the RandomMlcg component of TinyOS, which uses the Park-Miller
Minimum Standard Generator. Each symbol is {\it O-QPSK} modulated
\cite{802154-specifications} representing $4$ bits of the data.
We also had to remove the CSMA-CA mechanism from the CC2420
driver in order to allow both Alice and Bob to transmit concurrently.
Finally, it is worth noting that the orchestrator was used to overcome
the synchronization challenge in our experimental set-up. In practical
implementations, Bob (or Alice) could start jamming the
channel upon receiving the Start of Frame Delimiter (SFD).

In our implementation of the energy classifier, the discrete
nature of the transmit power levels is taken into consideration.
First, the eavesdropper was given the advantage of having the
classifier trained on a set of readings taken by running the
experiment in the same environment and at the same node locations
as those for which the classifier would be later used. In the
training phase, our classifier is given prior information on
the configuration, power levels selected for each node, and
the measured RSSI readings at each cycle. It then finds the
mean and variance of the measured RSSI values for each transmitted
power level for Alice and Bob when each of them sends alone in a cycle.
Any received symbol is classified as being transmitted by either of the
communicating nodes. This choice is based on our third observation
on the rates plotted in Fig.~$\ref{fig:numerical}$. When running
the classifier, a {\em maximum likelihood} rule is employed, where
the following expression is evaluated,
\begin{eqnarray*}\label{eqn:ml}
\frac{\max_{i} f_{A_{i}}(y)}{\max_{i} f_{B_{i}}(y)} \overset{A}{\underset{B}{\gtrless}} 1
\end{eqnarray*}
and the symbol is classified accordingly, where $f_{X_{i}}(y)$
is the value of the approximated Gaussian distribution of measured
RSSI values when source $X$ is the only transmitter with power
level $i$. In a practical implementation, the length of a cycle is
the duration of a single symbol, and hence, in our setup the classifier
bases its decision on a single RSSI reading. In evaluating the
classifier performance, we use the transmission scenario indicating the actual
status of the transmitters in each cycle and compare them with
the classification results to obtain the probability of each
possible misclassification event. We also assume that, in case of
concurrent transmission, Eve can correctly decode the symbol
received with the higher signal power, as suggested
in~\cite{Whitehouse2005}. This assumption is used to calculate the
values of $P_{e|(A,B)\rightarrow(A,B^{c})}$ and $P_{e|(A,B)\rightarrow(A^{c},B)}$.
We also use the same set of data to train and run a classifier for the
TDM protocol described above. Here, we
only consider cycles when Alice's transmitter is active, and consider
Bob's concurrent transmission as {\em jamming}.

Our experiments were conducted in a hallway environment, where only few
scatterers exist (only the wall structure). We train, run, and
evaluate our energy classifier, then use the resulting probabilities
in the rate expressions of Theorem~\ref{thm:tdm} and
Theorem~\ref{thm:interactive} to find the achievable
secrecy rates.
Figs.~\ref{fig:tdm} and~\ref{fig:interactive} report these results
in two representative configurations.
In the first, Alice and Bob are placed at the same location with
$d_{AE}=d_{BE}=20ft$, whereas $d_{AE}=1ft$ and $d_{BE}=20ft$
in the second. We note that the measured
difference of received signal power values from both transmitting
nodes was found to be $2$dB and $19$dB for Configurations $1$ and $2$,
respectively. This implies that the maximum rates in Fig.~\ref{fig:interactive}
and Fig.~\ref{fig:tdm} should be compared to the value of $R_s$ in
Fig.~\ref{fig:numerical} at $\frac{d_{\textrm{min}}}{d_{\textrm{max}}}=$
$0.79$ and $0.11$ respectively. We believe that this difference between
the theoretical and experimental results can be attributed to hardware
differences and the deviation of the actual channel from the simplistic
free space model used in our derivations. More specifically, we observe
that the maximum secrecy rates for the two-way randomized scheduling scheme
in our experimental results is slightly lower than those calculated numerically.
The reason is Eve's enhanced ability to distinguish between
the two sources of transmission due to the discrete nature of
the selected transmit power values. Nevertheless, the experimental
results establish the ability of our two-way randomized scheduling and power
allocation scheme to achieve perfect secrecy in practical near field
communication scenarios where the distance between Eve and
legitimate nodes will be larger than the inter-node distance,
{\bf even if Eve is equipped with a very large receive antenna}.

\section{Conclusion}\label{conclusion}
In this paper, we used the cooperative binning and channel prefixing approach to
obtain achievable secrecy rates for both the discrete memoryless
and Gaussian full-duplex two-way wiretap channels. In the proposed scheme, channel
prefixing is used to create \emph{an advantage} for the legitimate terminals
over the eavesdropper which is transformed by the binning codebooks into a
non-trivial secrecy rate region. A private key sharing and encryption was used to distribute the secure sum rate between the two users. We then introduced the idea of randomized
scheduling and established its fundamental role in the half-duplex two-way wiretap channel. Our
theoretical analysis revealed the ability of the proposed {\em
randomization} approach to achieve relatively high {\bf secure}
transmission rates under mild conditions on the eavesdropper
location. The ambiguity introduced at the eavesdropper by randomized scheduling was further validated by numerical results and extensive
experimental results using IEEE 802.15.4-enabled sensor boards in near field communication scenarios.

\section*{Acknowledgment}
The authors are thankful to C. Emre Koksal of The Ohio State University 
for insightful discussions.

\appendices

\section{Proof of Theorem ~\ref{thm:FullDuplexDMC-R}}
\label{prf:FullDuplexDMC-R}

First, we fix the probability density function $P(q)$, then generate a sequence ${\bf q}^{n'}$,
where the entries are i.i.d., and each entry is randomly
chosen according to $P(q)$. The sequence ${\bf q}^{n'}$ is
then given to all nodes before the communication takes place.

{\bf Codebook Generation:}

Consider user $i\in\{1,2\}$ that has a secret message
$w_i\in{\cal M}_i=\{1,2,...,M_i\}$, and a private
key $w_i^k\in{\cal M}_i^k=\{1,2,...,M_i^k\}$. For a given distribution
$P(u_i|q)$ and the sequence ${\bf q}$, generate $M_i^u$
i.i.d. sequences $\uv_i^{n'}(w_i^u)$, where $w_i^u\in[1,\cdots,M_i^u=2^{n'R_i^u}]$.
For each codeword
$\uv_i^{n'}(w_i^u)$, generate
$M_i^s M_i^k M_i^o M_i^x=2^{n'(R_i^s+R_i^k+R_i^o+R_i^x-\epsilon_0)}$ i.i.d.
sequences ${\bf c}_i^{n'}$, where $M_i=M_i^s
M_i^o M_i^u$, and $P({\bf c}_i^{n'}|{\bf u}_i^{n'})=\prod_{t=1}^{n'} P(c_i(t)|u_i(t))$.
Randomly distribute these into double indexed bins, where
each bin has $M_i^oM_i^x=2^{n'(R_i^o+R_i^x-\epsilon_0)}$ codewords,
and is indexed by the tuple $(w_i^s,w_i^k)$,
$w_i^s \in \{1,\cdots,M_i^s=2^{n'R_i^s}\}$, $w_i^o \in \{1,\cdots,M_i^o=2^{n'R_i^o}\}$, and
$w_i^x \in \{1,\cdots,M_i^x=2^{n'R_i^x}\}$. These codewords are represented by
$\cv_i^{n'}(w_i^u,w_i^s,w_i^k,w_i^o,w_i^x)$.

{\bf Encoding:} We use a block encoding scheme, where the full message is transmitted over $B$ blocks, each of length $n'$, and $n=n'B$. In the rest of the proof, we use bold face letters to represent vectors of block length $n'$. In each block, each user will transmit a private
key in addition to its message, and the other user will use this private key in the next block to secure
its message fully or in part. We omit the block indices for readability. In any given block, user $1$
will send the corresponding block messages of $w_1\in{\cal M}_1$ and the randomly selected $w_1^k\in{\cal M}_1^k$. The message index ($w_1$) is used to
select a tuple $(w_1^s,\tilde{w}_1^u,\tilde{w}_1^o)$, where $\tilde{w}_1^u$ and $\tilde{w}_1^o$ are encrypted into $w_1^u$ and $w_1^o$, respectively, using the private key
$\bar{w}_{2}^k=[\bar{w}_{2}^{k1},\bar{w}_{2}^{k2}]$ received from the other user in the previous block. In other words, let $\tilde{\bf{b}}_1^u$, $\tilde{\bf{b}}_1^o$, $\bf{b}_1^u$, $\bf{b}_1^o$, $\bar{\bf{b}}_2^{k1}$, and
$\bar{\bf{b}}_2^{k2}$ be the binary representations of $\tilde{w}_1^u$, $\tilde{w}_1^o$, $w_1^u$, $w_1^o$, $\bar{w}_2^{k1}$, and $\bar{w}_2^{k2}$ respectively.
Then, $\bf{b}_1^u = \tilde{\bf{b}}_1^u \oplus \bar{\bf{b}}_2^{k1}$,
and $\bf{b}_1^o = \tilde{\bf{b}}_1^o \oplus \bar{\bf{b}}_2^{k2}$. Here,
$w_1^u$ is used to select the cloud center of the super position coding (see, e.g.,~\cite{CT}),
$(w_1^s,w_1^k)$ is used to select the bin index, and the
codeword index within the bin is given by $(w_1^o,w_1^x)$, where $w_1^x$ is randomly selected according to a uniform distribution. (Note that, due to one time pad,
$w_1^o$ is also uniformly distributed.)
Thus the corresponding codeword ${\bf c}_1^{n'}(w_1^u,w_1^s,w_1^k,w_1^o,w_1^x)$ is selected. Then, the channel
input, ${\bf x}_1^{n'}$, is generated using the distribution $P(x_1|c_1)$. A similar encoding scheme is
employed at user $2$.
As the messages transmitted in different blocks are independent, satisfying the reliability and security constraints for each block guarantees their application for all messages transmitted in an arbitrarily large number of blocks.

{\bf Decoding:}

Consider a message ${\bf y}_1^{n'}$ received at the receiver of
user $1$. Let $A_{1,\epsilon}^{n'}$ be the set of {\em weakly} typical
$({\bf q}^{n'},\uv_2^{n'}(w_2^u),$
${\bf c}_2^{n'}(w_2^u,w_2^s,w_2^k,w_2^o,w_2^x),{\bf y}_1^{n'})$ sequences.
As ${n'}\rightarrow \infty$, the decoder will select $(w_2^u,w_2^s,w_2^k,w_2^o,w_2^x)$
such that,
\begin{equation*}
({\bf q}^{n'},\uv_2^{n'}(w_2^u),{\bf c}_2^{n'}(w_2^u,w_2^s,w_2^k,w_2^o,w_2^x),
{\bf y}_1^{n'},{\bf x}_1^{n'})\in A_{1,\epsilon}^{n'}
\end{equation*}
if such a tuple exists and is unique. Otherwise, the decoder
declares an error. Note that the decoder's estimate $\hat{w}_2$ is determined by $(w_2^s,w_2^u,w_2^o,\bar{w}_1^k)$, where $\bar{w}_1^k$ is the private key sent by user $1$ in the previous block. Decoding at receiver $2$ is symmetric and
can be described by reversing the indices $1$ and $2$ above.

{\bf Probability of Error Analysis:}

It follows by the proof of the capacity of the point to point DMC~\cite{Shannon:BSTJ:48} that for any given $\epsilon>0$, receiver $1$ can decode the corresponding messages
with $P_{e,2}<\epsilon$ for sufficiently large ${n'}$, if
\begin{eqnarray}
R_2^s+R_2^k+R_2^o+R_2^x &\leq& I(C_2;Y_1|X_1,U_2,Q) \label{eq:cond1a} \\
R_2^u+R_2^s+R_2^k+R_2^o+R_2^x &\leq& I(U_2,C_2;Y_1|X_1,Q) \label{eq:cond1b}
\end{eqnarray}
By symmetry, a similar condition applies to
receiver $2$ to have $P_{e,1}<\epsilon$, i.e.,
\begin{eqnarray}
R_1^s+R_1^k+R_1^o+R_1^x &\leq& I(C_1;Y_2|X_2,U_1,Q) \label{eq:cond2a} \\
R_1^u+R_1^s+R_1^k+R_1^o+R_1^x &\leq& I(U_1,C_1;Y_2|X_2,Q) \label{eq:cond2b}
\end{eqnarray}

{\bf Equivocation Computation:}
Consider the following argument.
\begin{eqnarray}\label{eq:equivocation}
H(W_1^k,W_1^s,W_2^k,W_2^s|{\bf Z})&\overset{(a)}{\geq}& H(W_1^k,W_1^s,W_2^k,W_2^s|{\bf Z},\Um_1, \Um_2, {\bf Q})\notag
\\&=&H(W_1^k,W_1^s,W_2^k,W_2^s,{\bf Z}|\Um_1, \Um_2, {\bf Q})-H({\bf Z}|\Um_1, \Um_2, {\bf Q})\notag
\\&=&H(W_1^k,W_1^s,W_2^k,W_2^s,{\bf C}_1,{\bf C}_2,{\bf Z}|\Um_1, \Um_2, {\bf Q})
-H({\bf Z}|\Um_1, \Um_2, {\bf Q})\notag
\\&&{-}\:H({\bf C}_1,{\bf C}_2|W_1^k,W_1^s,W_2^k,W_2^s,{\bf Z},\Um_1, \Um_2, {\bf Q})\notag\\
&=&H({\bf Z}|{\bf C}_1,{\bf C}_2,W_1^k,W_1^s,W_2^k,W_2^s,\Um_1, \Um_2, {\bf Q})\notag\\
&&{+}\:H(W_1^k,W_1^s,W_2^k,W_2^s,{\bf C}_1,{\bf C}_2|\Um_1, \Um_2, {\bf Q})\notag\\
&&{-}\:H({\bf Z}|\Um_1, \Um_2, {\bf Q})-H({\bf C}_1,{\bf C}_2|W_1^k,W_1^s,W_2^k,W_2^s,{\bf Z},\Um_1, \Um_2, {\bf Q})
\notag
\\&\overset{(b)}{=}&[H({\bf Z}|{\bf C}_1,{\bf C}_2,\Um_1, \Um_2, {\bf Q})
-H({\bf Z}|\Um_1, \Um_2, {\bf Q})]+H({\bf C}_1,{\bf C}_2|\Um_1, \Um_2, {\bf Q})\notag\\
&&{-}\:H({\bf C}_1,{\bf C}_2|W_1^k,W_1^s,W_2^k,W_2^s,{\bf Z},\Um_1, \Um_2, {\bf Q})\notag
\\&\overset{(c)}{\geq}&-{n'}I(C_1,C_2;Z|U_1,U_2,Q)-{n'}\epsilon_1
+H({\bf C}_1,{\bf C}_2|\Um_1, \Um_2, {\bf Q})\notag\\
&&{-}\:H({\bf C}_1,{\bf C}_2|W_1^k,W_1^s,W_2^k,W_2^s,{\bf Z},\Um_1, \Um_2, {\bf Q}),
\end{eqnarray}
where (a) follows from the fact that conditioning does not
increase the entropy, (b) follows from the fact that, given
$\Um_1, \Um_2, {\bf Q}$, $(W_1^k,W_1^s,W_2^k,W_2^s)\rightarrow
({\bf C}_1,{\bf C}_2)\rightarrow ({\bf Z})$ is a Markov Chain,
and (c) follows from
$I({\bf C}_1,{\bf C}_2;{\bf Z}|\Um_1, \Um_2, {\bf Q})$ $\leq {n'} I(C_1,C_2;Z|U_1,U_2,Q)+{n'}\epsilon_1$
with $\epsilon_1\to 0$ as ${n'}\to \infty$ for a discrete memoryless
channel (see, e.g.,~\cite[Lemma 8]{wyner}).

Here,
\begin{equation}\label{eq:prf1}
H({\bf C}_1,{\bf C}_2|\Um_1, \Um_2, {\bf Q})={n'}(R_1^k+R_1^s+R_1^o+R_1^x+R_2^k+R_2^s+R_2^o+R_2^x-2\epsilon_0),
\end{equation}
as, given $(\Um_1, \Um_2, {\bf Q})=(\uv_1, \uv_2, {\bf q})$, the tuple $({\bf C}_1,{\bf C}_2)$ has
$2^{{n'}(R_1^k+R_1^s+R_1^o+R_1^x+R_2^k+R_2^s+R_2^o+R_2^x-2\epsilon_0)}$ possible values each
with equal probability, and,
\begin{equation*}
H({\bf C}_1,{\bf C}_2|W_1^k=w_1^k,W_1^s=w_1^s,W_2^k=w_2^k,W_2^s=w_2^s,{\bf Z},
\Um_1=\uv_1, \Um_2=\uv_2,{\bf Q}={\bf q})\leq {n'}\epsilon_2
\end{equation*}
for $\epsilon_2\rightarrow0$ as ${n'}\rightarrow\infty$. This follows from
the Fano's inequality, as the eavesdropper can decode the randomization indices $(w_1^o,w_1^x,w_2^o,w_2^x)$  given $(w_1^k,w_1^s,w_2^k,w_2^s)$ if the following conditions are satisfied.
\begin{equation}\label{eq:fcondronex}
R_1^o+R_1^x \leq I(C_1;Z|C_2,U_1,U_2,Q)
\end{equation}
\begin{equation}\label{eq:fcondrtwox}
R_2^o+R_2^x \leq I(C_2;Z|C_1,U_1,U_2,Q)
\end{equation}
\begin{equation}\label{eq:fcondrsumx}
R_1^o+R_1^x+R_2^o+R_2^x\leq I(C_1,C_2;Z|U_1,U_2,Q)
\end{equation}
By averaging over $W_1^k$, $W_1^s$, $W_2^k$, $W_2^s$,
$\Um_1$, $\Um_2$, and ${\bf Q}$, we obtain
\begin{equation}\label{eq:prf2}
H({\bf C}_1,{\bf C}_2|W_1^k,W_1^s,W_2^k,W_2^s,{\bf Z},\Um_1,\Um_2,{\bf Q})\leq {n'}\epsilon_2,
\end{equation}
Now, once we set,
\begin{equation}\label{eq:prf3}
R_1^o+R_1^x+R_2^o+R_2^x=I(C_1,C_2;Z|U_1,U_2,Q),
\end{equation}
and combine ~\eqref{eq:equivocation},~\eqref{eq:prf1},~\eqref{eq:prf2},
and~\eqref{eq:prf3}, we obtain
\begin{equation*}\label{eq:sec1}
\frac{1}{n'}H(W_1^k,W_1^s,W_2^k,W_2^s|{\bf Z})\geq R_1^k+R_1^s+R_2^k+R_2^s-(\epsilon_1+\epsilon_2+2\epsilon_0)
\end{equation*}
and $(\epsilon_1+\epsilon_2+2\epsilon_0)\rightarrow0$
as ${n'}\rightarrow\infty$.

Since $\bar{w}_2^k$ ($\bar{w}_1^k$) is used as a private key to secure the
part of the message carried in $w_1^u, w_1^o$ ($w_2^u, w_2^o$, respectively)
with the one-time-padded scheme, the secrecy constraint
\begin{equation*}
\frac{1}{n'}H(W_1,W_2|{\bf Z})\geq R_1+R_2-\epsilon
\end{equation*}
is satisfied (see~\cite{Shannon:BSTJ:49}) if
\begin{eqnarray}
R_1^u+R_1^o &\leq& R_2^k\label{eq:cond5a}\\
R_2^u+R_2^o &\leq& R_1^k\label{eq:cond5b}
\end{eqnarray}
where we set $R_1=R_1^u+R_1^o+R_1^s$ and $R_2=R_1^u+R_2^o+R_2^s$.

Finally, we note that $R_1^u=R_2^u=R_1^o=R_2^o=0$ for the first block.
However, the impact of this condition on the achievable rate
diminishes as the number of blocks $B \rightarrow \infty$.
The region achieved by the proposed scheme is given by
\eqref{eq:cond1a}, \eqref{eq:cond1b}, \eqref{eq:cond2a}, \eqref{eq:cond2b},
\eqref{eq:fcondronex}, \eqref{eq:fcondrtwox}, \eqref{eq:fcondrsumx},
\eqref{eq:cond5a}, and \eqref{eq:cond5b}.


\section{Proof of Theorem ~\ref{thm:fullduplexdmc}}
\label{prf:fullduplexdmc}

For a given distribution $p\in\Pc^F$, let
$$I_6\triangleq I(C_1;Y_2|X_2,Q)-I(C_1;Z|Q),$$
$$I_7\triangleq I(C_2;Y_1|X_1,Q)-I(C_2;Z|Q),$$
and
$$I_8\triangleq I(C_1;Y_2|X_2,Q)+I(C_2;Y_1|X_1,Q)-I(C_1,C_2;Z|Q).$$

If $I_8<0$, we set $R_1=R_2=0$. Hence, we only focus on
cases for which $I_8\geq0$. This implies that $I_6\geq 0$
and/or $I_7\geq 0$. (As $I_6<0$ and $I_7<0$ implies that
$I_8<0$.) We detail the proof for the following cases.

\textbf{Case 1:} $I_6\geq 0$ and $I_7\geq 0$ for the given $p\in\Pc^F$.

We set $U_1,U_2$ as deterministic and $R_1^{u}=R_2^{u}=0$
in Theorem~\ref{thm:FullDuplexDMC-R}, and obtain that
\begin{eqnarray}
R_1^s+R_1^k+R_1^o+R_1^x &\leq& I(C_1;Y_2|X_2,Q)\triangleq I_1 \label{eq:Appendix1Eq11}\\
R_2^s+R_2^k+R_2^o+R_2^x &\leq& I(C_2;Y_1|X_1,Q)\triangleq I_2 \label{eq:Appendix1Eq12}\\
R_1^o+R_1^x &\leq& I(C_1;Z|C_2,Q)\triangleq I_3 \label{eq:Appendix1Eq13}\\
R_2^o+R_2^x &\leq& I(C_2;Z|C_1,Q)\triangleq I_4 \label{eq:Appendix1Eq14}\\
R_1^o+R_1^x + R_2^o+R_2^x &=& I(C_1,C_2;Z|Q)\triangleq I_5 \label{eq:Appendix1Eq15}\\
R_1^o &\leq& R_2^k \label{eq:Appendix1Eq16}\\
R_2^o &\leq& R_1^k  \label{eq:Appendix1Eq17}
\end{eqnarray}

As $I_6\geq 0$, $I_7\geq 0$, and $I_8\geq 0$, we can
choose the rates as follows:
\begin{itemize}
  \item If $I(C_2;Y_1|X_1,Q)\geq I(C_2;Z|C_1,Q)$, then we choose

  $R_1^k=0$,
  $R_1^o=R_2^k$,
  $R_1^x=\left[I(C_1,C_2;Z|Q)-I(C_2;Y_1|X_1,Q)\right]^+$,

  $R_1^s=I(C_1;Y_2|X_2,Q)-R_2^k-\left[I(C_1,C_2;Z|Q)-I(C_2;Y_1|X_1,Q)\right]^+$,

  $R_2^k=I(C_1;Z|Q)-\left[I(C_1,C_2;Z|Q)-I(C_2;Y_1|X_1,Q)\right]^+$,
  $R_2^o=0$,
  $R_2^x=I(C_1,C_2;Z|Q)-R_2^k-R_1^x$,

  $R_2^s=\left[I(C_2;Y_1|X_1,Q)-I(C_1,C_2;Z|Q)\right]^+$.
  \item If $I(C_2;Y_1|X_1,Q)<I(C_2;Z|C_1,Q)$, then we choose

  $R_1^s=I(C_1;Y_2|X_2,Q)-I(C_1,C_2;Z|Q)+I(C_2;Y_1|X_1,Q)$,
  $R_1^x=I(C_1,C_2;Z|Q)-R_2^x$,
  $R_2^x=I(C_2;Y_1|X_1,Q)$, and the remaining rates equal to zero.
\end{itemize}

These choice of \emph{non-negative} rates satisfy conditions in
\eqref{eq:Appendix1Eq11}-\eqref{eq:Appendix1Eq17}, and hence
we can achieve the rate pair
$$(R_1=I_1-[I_5-I_2]^+,R_2=[I_2-I_5]^+).$$ Similarly,
by reversing the indices above, the rate pair
$$(R_1=[I_1-I_5]^+,R_2=I_2-[I_5-I_1]^+)$$
is achievable. Now, combining these two achievable points
we obtain the following achievable region:
The set of non-negative ($R_1$, $R_2$) pairs satisfying
\begin{eqnarray*}
R_1 &\leq& I_1\label{eq:finalresult1}\\
R_2 &\leq& I_2\label{eq:finalresult2}\\
R_1+R_2 &\leq& I_1+I_2-I_5\label{eq:finalresult3}
\end{eqnarray*}
are achievable.

\textbf{Case 2:} $I_6\geq 0$ and $I_7< 0$ for the given $p\in\Pc^F$.

We set $U_1$ and $C_2$ as deterministic and choose
the following rates in Theorem~\ref{thm:FullDuplexDMC-R}
(other rates are chosen to be $0$).
\begin{eqnarray*}
R_1^{k}&=&I(C_1;Y_2|X_2,Q)-I(C_1;Z|U_2,Q)-R_1^{s}\\
R_1^{s}&\leq& I(C_1;Y_2|X_2,Q)-I(C_1;Z|U_2,Q)\\
R_1^{x}&=&I(C_1;Z|U_2,Q)\\
R_2^{u}&=&\min\{I(U_2;Y_1|X_1,Q),R_1^{k}\}
\end{eqnarray*}
For the given $p\in\Pc^F$
with $I_6\geq 0$ and $I_7< 0$, the following region is achievable.
\begin{eqnarray*}
R_1&\leq&I(C_1;Y_2|X_2,Q)\\
R_2&\leq&I(U_2;Y_1|X_1,Q)\\
R_1+R_2&\leq&I(C_1;Y_2|X_2,Q)-I(C_1;Z|U_2,Q)
\end{eqnarray*}
Note that the above region is the same as the one in the theorem statement, with the random variable $U_2$ taking the role of $C_2$.
\textbf{Case 3:} $I_6< 0$ and $I_7\geq 0$ for the given $p\in\Pc^F$.

Reversing the indices everywhere in case 2 above, we obtain
the following achievable region
\begin{eqnarray*}
R_1&\leq&I(C_1;Y_2|X_2,Q)\\
R_2&\leq&I(C_2;Y_1|X_1,Q)\\
R_1+R_2&\leq&I(C_2;Y_1|X_1,Q)-I(C_2;Z|C_1,Q)
\end{eqnarray*}

Combining the above cases completes the proof.

\begin{rem}\label{rem:RegionWithoutU1U2}
The above scheme either uses the one time padded private key as one of the two selectors for the randomization index (Case 1), or does not employ the random binning coding scheme and only uses the private key at one of the user (User 2 in Case 2, and User 1 in Case 3). Hence, no superposition coding is present. We should also note that the achievable
rates proved above in Cases 2 and 3, can be higher than that of the statement. However, as already mentioned, we only use this Theorem as a simple special case of Theorem~\ref{thm:FullDuplexDMC-R}.
\end{rem}


\section{Proof of Corollary~\ref{cor:fullduplexmodulo}}
\label{prf:fullduplexmodulo}

We set $|{\cal Q}|=1$ in Theorem~\ref{thm:fullduplexdmc}
and take the convex hull of the achievable rates.
We compute the following terms.
\begin{eqnarray}\label{eq:fmcondone}
I(C_1;Y_2|X_2,Q)&=&H(Y_2|X_2)-H(Y_2|C_1,X_2)\notag
\\&\leq&1-H(\hat{\epsilon}_2)
\end{eqnarray}
\begin{eqnarray}\label{eq:fmcondtwo}
I(C_2;Y_1|X_1,Q)&=&H(Y_1|X_1)-H(Y_1|C_2,X_1)\notag
\\&\leq&1-H(\hat{\epsilon}_1)
\end{eqnarray}
\begin{eqnarray*}\label{eq:fmeqone}
I(C_1;Y_2|X_2,Q)+I(C_2;Y_1|X_1,Q)-I(C_1,C_2;Z|Q)&=&
\left(H(Y_1|X_1)+H(Y_2|X_2)-H(Z)\right)\notag\\&&\:{+}\left(H(Z|C_1,C_2)-H(Y_1|C_2,X_1)-H(Y_2|C_1,X_2)\right)
\notag
\end{eqnarray*}
By noting that,
\begin{eqnarray*}\label{eq:fmeqtwo}
H(Y_1|X_1)+H(Y_2|X_2)-H(Z) &=& \left(H(X_2\oplus N_1)+H(X_1\oplus N_2)-H(X_1 \oplus X_2 \oplus N_e)\right)\notag
\\&\overset{(a)}{=}& H(X_2\oplus N_1)+H(X_1\oplus N_2)-H(X_2\oplus N_1 \oplus X_1\oplus N_2 \oplus \hat{N}_e)\notag
\\&\overset{(b)}{\leq}&H(X_2\oplus N_1)+H(X_1\oplus N_2)-H(X_2\oplus N_1 \oplus X_1\oplus N_2)\notag
\\&=&H(X_2\oplus N_1)+H((X_2\oplus N_1 \oplus X_1\oplus N_2)|(X_2\oplus N_1))
\notag\\&&\:{-}H(X_2\oplus N_1 \oplus X_1\oplus N_2)\notag
\\&=& H((X_2\oplus N_1), (X_2\oplus N_1 \oplus X_1\oplus N_2)) - H(X_2\oplus N_1 \oplus X_1\oplus N_2)\notag
\\&=& H((X_2\oplus N_1)|(X_2\oplus N_1 \oplus X_1\oplus N_2))\notag
\\&\leq& 1
\end{eqnarray*}
where (a) follows by setting $\hat{N}_e=N_1 \oplus N_2 \oplus N_e$, (b) follows from the fact that conditioning does not increase entropy, we conclude that,
\begin{equation}\label{eq:fmcondthree}
I(C_1;Y_2|X_2,Q)+I(C_2;Y_1|X_1,Q)-I(C_1,C_2;Z|Q)\leq 1+H(\hat{\epsilon}_e)
-H(\hat{\epsilon}_1)-H(\hat{\epsilon}_2),
\end{equation}

The proof is complete by combining the terms in
~\eqref{eq:fmcondone}, ~\eqref{eq:fmcondtwo}, and ~\eqref{eq:fmcondthree}
with Theorem~\ref{thm:fullduplexdmc}. We note that equality applies in the three 
mentioned terms when the variables $C_1,C_2$ are drawn from the
uniform distribution over $\{0,1\}$.


\section{The region $\Rc^{FG}$ includes that of~\cite{HeAndYener}}
\label{prf:HeAndYener}

We utilize the time sharing parameter as follows.
Let $\Qc=\{1,2\}$, where $q=1$ with prob. $(1-\alpha)$ and
$q=2$ with prob. $\alpha$. The remaining distributions are as follows.
\begin{itemize}
    \item For $q=1$, we set $C_1$ as deterministic and
    $X_1=\bar{N}_1$ for channel prefixing.
    $C_2$ and $\bar{N}_1$ are generated with full powers $P_2$
    and $P_1$, respectively.
    \item For $q=2$, we set $C_2$ as deterministic and
    $X_2=\bar{N}_2$ for channel prefixing.
    $C_1$ and $\bar{N}_2$ are generated with full powers $P_1$
    and $P_2$, respectively.
\end{itemize}

With this choice the region in Theorem~\ref{cor:fullduplexgaussian}
reduces to the following:
\begin{eqnarray*}
R_1&\leq& I(C_1;Y_2|X_2,Q)=\alpha \gamma(P_1)\\
R_2&\leq& I(C_2;Y_1|X_1,Q)=(1-\alpha) \gamma(P_2)\\
R_1+R_2&\leq& I(C_1;Y_2|X_2,Q)+I(C_2;Y_1|X_1,Q)
-I(C_1,C_2;Z|Q) \\
&=& \alpha \gamma(P_1)+(1-\alpha) \gamma(P_2)
- \alpha \gamma\left(\frac{g_{e1}P_1}{1+g_{e2}P_2}\right)
- (1-\alpha) \gamma\left(\frac{g_{e2}P_2}{1+g_{e1}P_1}\right)
\end{eqnarray*}

Let
\begin{equation*}
R_K \triangleq \gamma(P_2)-
\gamma\left(\frac{g_{e2}P_2}{1+g_{e1}P_1}\right),
\end{equation*}
and
\begin{eqnarray*}
R_1(\alpha) \triangleq \left[ \alpha \gamma(P_1)-
\left[ \alpha \gamma\left(\frac{g_{e1}P_1}{1+g_{e2}P_2}\right) -
(1-\alpha) R_K
\right]^+
\right]^+.
\end{eqnarray*}
If $R_K\leq 0$, then $R_1^*= \gamma(\rho_1)-
\gamma(\frac{g_{e1}\rho_1}{1+g_{e2}\rho_2})$
is achieved by setting $\alpha=1$ in the above region.
If $R_K>0$, then the rate $R_1(\alpha)$ is achievable.
As $R_1^* = \max\limits_{\alpha\in [0,1]} R_1(\alpha)$
for $R_K>0$, the point $[R_1^*,0]$ is achievable.
The achievability of $[0,R_2^*]$ can be obtained similarly, and hence,
the region of Theorem~\ref{thm:fullduplexdmc} includes
that of~\cite{HeAndYener}.


\section{Sketch of the Proof of Corollary~\ref{thm:HalfDuplexRandomScheduling}}
\label{prf:HalfDuplexRandomScheduling}

The channel $P^*(y_1,y_2,z|x_1,x_2)$ with states $4$ given to users
reduces to the following equivalent channel.
\begin{eqnarray*}
P^{**}(y_1,y_2,z|x_1,x_2) = \left\{
\begin{array}{c}
P(y_2,z|x_1,x_2=?) {\bf 1}_{\{y_1,?\}}, \textrm{ for state } 1 \\
P(y_1,z|x_1=?,x_2) {\bf 1}_{\{y_2,?\}}, \textrm{ for state } 2 \\
P(z|x_1,x_2) {\bf 1}_{\{y_1,?\}} {\bf 1}_{\{y_2,?\}}, \textrm{ for state } 3 \\
{\bf 1}_{\{y_1,?\}} {\bf 1}_{\{y_2,?\}} {\bf 1}_{\{z,?\}}, \textrm{ for state } 4,
\end{array}
\right.
\end{eqnarray*}

Note that $P^{**}(y_1,y_2,z|x_1,x_2)$ is not equivalent to
$P^*(y_1,y_2,z|x_1,x_2)$. We describe coding scheme for
the channel $P^{**}$. The channel $P^{**}$
will be equivalent to $P^*$, if the nodes can classify the state
$4$ of the channel.

We first consider the channel between $x_1$ and $y_2$
over a block of $n'$ channel uses.
There are $P_1(1-P_2)n'$ symbols for which the channel is
in state 1 (law of large numbers). The symbols for state 2 have
$y_2=?$ are deleted. (These correspond to
symbols that have $x_1=?$.)
The symbols corresponding to state $3$ of the
channel can be modeled as random erasures. (There are
$P_1P_2n'$ such symbols with high probability as $n'$
gets large.) Finally, the channel outputs
corresponding to state 4 will be erased (as there
is no transmission from user 1). Therefore we consider
coding over $[P_1(1-P_2)+P_1P_2]n'$ symbols between
$x_1$ and $y_2$, for which $P_1P_2n'$ symbols are erasures
(as $n'$ gets large).

We first define the followings.
\begin{eqnarray*}
n_1&=&P_1(1-P_2)n'\\
n_2&=&(1-P_1)P_2n'\\
n_3&=&P_1P_2n'\\
n_4&=&(1-P_1)(1-P_2)n'
\end{eqnarray*}

In the codebook design, we generate
$2^{n' (R_1^k+R_1^s+R_1^o+R_1^x)}$ codewords
denoted by $\cv_1^{n_1+n_3}$ of length $n_1+n_3$.
For each symbol time, with probability $(1-P_1)$
we input $x_1=?$ (no transmission event), and with probability
$P_1$ we generate the channel input $x_1$
according to $P(x_1|c_1)$ using the next symbol in
$\cv_1^{n_1+n_3}$. If there is no remaining symbols in $\cv_1^{n_1+n_3}$,
we input $x_1=?$ (the effect of this diminishes
as $n'$ gets large).
Similarly, we generate
$2^{n' (R_2^k+R_2^s+R_2^o+R_2^x)}$ codewords
denoted by $\cv_2^{n_2+n_3}$ of length $n_2+n_3$, and map it to $\xv_2^{n'}$.

For the decodability, the typical set decoding is employed.
For example, the decoder $2$ will select $(w_1^k,w_1^s,w_1^o,w_1^x)$
such that,
\begin{equation*}
(\qv^{n'},\cv_1^{n_1+n_3}(w_1^k,w_1^s,w_1^o,w_1^x),
\yv_2^{n_1+n_3})\in A_{1,\epsilon}^{n_1+n_3}(\textrm{state 1}).
\end{equation*}
Here, the remaining symbols in $\yv_2^{n'}$ are deleted as they are equal to
$?$. The equivalent channel is the random mapping of $\cv_1^{n_1+n_3}$
to $\xv_1^{n_1+n_3}$, from which $n_3$ symbols are randomly erased and the remaining
ones generate $\yv_2^{n_1}$.

Here the error probability (averaged over the ensemble) can be made
small, if
\begin{eqnarray}
R_1^k+R_1^s+R_1^o+R_1^x &\leq& \frac{n_1}{n'}I(C_1;Y_2|X_2,Q,\textrm{state 1}) \label{eq:Appendix2Eq1}\\
R_2^k+R_2^s+R_2^o+R_2^x &\leq& \frac{n_2}{n'}I(C_2;Y_1|X_1,Q,\textrm{state 2})
\label{eq:Appendix2Eq2}
\end{eqnarray}

To show that the secrecy constraint is satisfied, we
follow the steps similar to that of
Appendix~\ref{prf:FullDuplexDMC-R}.
Due to key sharing it suffices to show
\begin{eqnarray*}
\frac{1}{n'}H(W_1^k,W_1^s,W_2^k,W_2^s|\Zm^{n'})&\geq&
R_1^k+R_1^s+R_2^k+R_2^s - \epsilon,
\end{eqnarray*}
for sufficiently large $n'$, together with
\begin{eqnarray}
R_1^o &\leq& R_2^k \label{eq:Appendix2Eq3}, \textrm{ and}\\
R_2^o &\leq& R_1^k \label{eq:Appendix2Eq4}.
\end{eqnarray}

Here, the latter is used to ensure that there are
sufficient number of key bits (from the previous block) to secure
messages that are carried in the open part (of the current block),
and the former is satisfied (from the equivocation
computation provided in Appendix~\ref{prf:FullDuplexDMC-R})
if the rates satisfy the followings.
\begin{eqnarray}
R_1^o+R_1^x &\leq& \frac{n_1+n_2}{n'}I(C_1;Z|C_2,\textrm{state 1 or 2})
+ \frac{n_3}{n'}I(C_1;Z|C_2,\textrm{state 3}) \label{eq:Appendix2Eq5}\\
R_2^o+R_2^x &\leq& \frac{n_1+n_2}{n'}I(C_2;Z|C_1,\textrm{state 1 or 2})
+ \frac{n_3}{n'}I(C_2;Z|C_1,\textrm{state 3}) \label{eq:Appendix2Eq6}\\
R_1^o+R_1^x+R_2^o+R_2^x &=& \frac{n_1+n_2}{n'}I(C_1,C_2;Z|\textrm{state 1 or 2})
+ \frac{n_3}{n'}I(C_1,C_2;Z|\textrm{state 3}) \label{eq:Appendix2Eq7},
\end{eqnarray}

Then the region obtained by equations \eqref{eq:Appendix2Eq1},
\eqref{eq:Appendix2Eq2}, \eqref{eq:Appendix2Eq3}, \eqref{eq:Appendix2Eq4},
\eqref{eq:Appendix2Eq5}, \eqref{eq:Appendix2Eq6}, and
\eqref{eq:Appendix2Eq7} can be simplified (using the same steps given
in Appendix~\ref{prf:fullduplexdmc}) to obtain the stated result.


\section{Proof of Proposition~\ref{prop:halfduplexmodulo}}
\label{prf:halfduplexmodulo}

The proof follows by Corollary~\ref{thm:HalfDuplexRandomScheduling},
where we set $|\Qc|=1$ and compute the followings.
\begin{eqnarray*}
I(C_1;Y_2|X_2,Q,\textrm{state 1})&=&H(\hat{\mu}_2)-H(\hat{\epsilon}_2)\\
I(C_2;Y_1|X_1,Q,\textrm{state 2})&=&H(\hat{\mu}_1)-H(\hat{\epsilon}_1)
\end{eqnarray*}
and the eavesdropper's observed information is given by,
\begin{eqnarray*}
I(C_1,C_2;Z|\textrm{state 3})&=&H(\hat{\mu}_e)-H(\hat{\epsilon}_e)\\
I(C_1,C_2;Z|\textrm{state 1 or 2})&=&\big(H(\mu_{e1}d_1+\mu_{e2}d_2)-0.5H(d_1\epsilon_{e1}+d_2\epsilon_{e2})
- 0.5H(d_1(1-\epsilon_{e1})+d_2\epsilon_{e2})\big),\notag
\end{eqnarray*}
where the last equality is a direct results of the following computation.
\begin{eqnarray*}
&&H(Z|C_1=0,C_2=0)=H(d_1\epsilon_{e1}+d_2\epsilon_{e2})
\\&&H(Z|C_1=1,C_2=1) = H(Z|C_1=0,C_2=0)
\\&&H(Z|C_1=1,C_2=0)=H(d_1(1-\epsilon_{e1})+d_2\epsilon_{e2})
\\&&H(Z|C_1=0,C_2=1)=H(Z|C_1=1,C_2=0)
\end{eqnarray*}


\section{Proof of Theorem~\ref{thm:tdm}}
\label{prf:tdm}

Consider the time intervals when Alice is transmitting codewords to Bob.
Let $\alpha_{M}$, $\alpha_{E}$ denote the fraction of
symbols erased at Bob and Eve, and ${P_{e}}^{(M)}$,
$P_{e}^{(E)}$ denote the probability of erroneously
decoding a received symbol given that it was not erased
at Bob and Eve, respectively. By applying the appropriate
random binning scheme~\cite{wyner}, the following secrecy
rate is achievable (\cite{BCC}, Theorem 3).
\begin{eqnarray*}
R = \max\limits_{P(x)} \left\{[I(X;Y)-I(X;Z)]^{+}\right\},
\end{eqnarray*}
where $X$ denotes the input, $Y$ and $Z$ denote the outputs
at Bob and Eve, respectively. Considering the transition model
for this channel, we see
\begin{eqnarray*}
H(Y|X) = H(\alpha_{M}) + (1 - \alpha_{M})H({P_{e}}^{(M)}).
\end{eqnarray*}

Now, let $\textrm{Pr}\{X(t)=\sqrt{\rho(t)}\}=\Pi$
and $\textrm{Pr}\{X(t)=-\sqrt{\rho(t)}\}=1-\Pi$. Then,
\begin{eqnarray*}
H(Y) = H(\alpha_{M}) + (1 - \alpha_{M})H(\Pi (1 - {P_{e}}^{(M)})
+ (1 - \Pi) {P_{e}}^{(M)}),
\end{eqnarray*}
and $\max_{\Pi} H(Y) = H(\alpha_{M}) + (1 - \alpha_{M})$
when $\Pi = 0.5$. This results in
\begin{eqnarray*}
\max\limits_{P(x)} I(X;Y) \:=\:
\max\limits_{P(x)} (H(Y)-H(Y|X)) \:=\: (1-\alpha_{M})(1-H({P_{e}}^{(M)}))
\end{eqnarray*}
Similarly, $\max\limits_{P(x)} I(X;Z) = (1-\alpha_{E})(1-H(P_{e}^{(E)}))$.

Following the half-duplex assumption, all data symbols transmitted
during the same time interval of a feedback transmission will be
considered as erasures at the legitimate receiver's channel.
Therefore, as the frame length $T\rightarrow\infty$,
$\alpha_{M} = \beta$. For the rest of the symbols, the
probability of symbol error by the hard decision detector will be
\begin{eqnarray*}
{P_{e}}^{(M)}(t) = 1-\phi\left(\sqrt \frac{\rho(t)}{{d_{AB}}^{\alpha}}\right).
\end{eqnarray*}

On the other hand, feedback transmissions will introduce decoding
errors at Eve. Noting that $1 - P_{m}$ of those
corrupted symbols will be detected by the energy classifier,
we get
\begin{eqnarray*}
\alpha_{E} = \beta (1 - P_{m}) + (1 - \beta) P_{f}
\end{eqnarray*}
\begin{eqnarray*}
P_{e}^{(E)} = \frac{\beta P_{m}P_{e|m}}{1-\alpha_{E}}.
\end{eqnarray*}

Combining these results, we obtain
\begin{eqnarray*}
\max\limits_{P(x)}I(X;Y)
&=& (1 - \beta)\left(1 - H\left(\frac{1}{T}\displaystyle\sum_{t=1}^{T}{P_{e}}^{(M)}(t)\right)\right)\\
&\geq& (1 - \beta)\left(1 - H\left(1-\phi\left(\sqrt \frac{\rho_{\textrm{min}}}{{d_{AB}}^{\alpha}}\right)\right)\right)\\
&\triangleq& R_{M}
\end{eqnarray*}
and denoting $R_{E}\triangleq (1 - \alpha_{E})(1 - H(P_{e}^{(E)}))$,
we have $\max\limits_{P(x)}I(X;Z)=R_{E}$, and
\begin{eqnarray*}
R \:=\: \max_{P(x)} ([I(X;Y)-I(X;Z)]^{+})
\:\geq\: [\max_{P(x)}I(X;Y)-\max_{P(x)}I(X;Z)]^{+}
\:\geq\: [R_{M}-R_{E}]^{+}.
\end{eqnarray*}
Finally, we consider a {\em max-min} strategy whereby
the legitimate receiver assumes that the eavesdropper chooses
its position around the perimeter of the circle and the energy
classifier's mechanism ${\cal C}$ to minimize the secrecy rate
$R_{s}$. Accordingly, the legitimate receiver determines the
probability of random feedback transmission $\beta$ and both
the data and feedback signal power distributions $f_{1}$ and $f_{2}$
to maximize this worst case value (note that the rate is scaled
by $0.5$ to account for the time division between the two nodes).
We obtain
\begin{eqnarray*}
R_{s} = 0.5\max_{\beta,f_{1},f_{2}}\left\{\min_{\theta,\cal{C}}R\right\}.
\end{eqnarray*}


\section{Proof of Theorem~\ref{thm:interactive}}
\label{prf:interactive}

Due to symmetry, we only consider the secrecy rate of
Alice's message to Bob. Following the previous proof, we have the following achievable secrecy rate,
\begin{eqnarray*}
R &=& [(1-\alpha_{M})(1-H({P_{e}}^{(M)}))
-(1-\alpha_{E})(1-H({P_{e}}^{(E)}))]^{+},
\end{eqnarray*}
where $\alpha_{M}$, $\alpha_{E}$ denote the fraction of
symbols erased at Bob and Eve, and ${P_{e}}^{(M)}$,
$P_{e}^{(E)}$ denote the probability of erroneously
decoding a received symbol given that it was not erased
at Bob and Eve, respectively.
Using half-duplex antennas, each node will be able to
decode a symbol transmitted by the other node only when
its own transmitter is idle and the other node's transmitter
is active. These two conditions are simultaneously satisfied
with probability $P_{t}(1 - P_{t})$ yielding $\alpha_{M} =
1 - P_{t}(1 - P_{t})$. We also see that
\begin{eqnarray*}
{P_{e}}^{(M)}(t)=1 - \phi\left(\sqrt \frac{\rho(t)}{{d_{AB}}^{\alpha}}\right).
\end{eqnarray*}

The symbols classified by Eve as being transmitted by Alice can
belong to one of three categories. The first, which takes place
with probability $P_{t}\left(1 - P_{t}\right)\left(1 -
P_{(A,B^{c})\rightarrow(A^{c},B)} - P_{(A,B^{c})\rightarrow(A,B)}\right)$,
represents the portion successfully detected and correctly decoded by
Eve. The second corresponds to symbols transmitted by Bob and
misclassified as belonging to Alice;
with probability $P_{t}\left(1 - P_{t}\right)P_{(A^{c},B)\rightarrow(A,B^{c})}$.
Those symbols are independent from the ones transmitted by Alice, and hence,
have a probability $0.5$ of being different. The third category, with
probability $P_{t}^{2}P_{(A,B)\rightarrow(A,B^{c})}$, corresponds to
concurrent transmissions that are not {\em erased} by Eve's classifier
and misclassified as Alice's symbols. The probability of error in these
symbols is denoted by $P_{e|(A,B)\rightarrow(A,B^{c})}$.
Combining these, we get
\begin{eqnarray*}
\alpha_{E} = 1 - D_{A}
\end{eqnarray*}
\begin{eqnarray*}
P_{e}^{(E)} = \frac{P_{e}^{(EA)}}{1-\alpha_{E}}
\end{eqnarray*}
\begin{eqnarray*}
R &=& \left[(1-\alpha_{M})(1-H({P_{e}}^{(M)}))-(1-\alpha_{E})
\left(1-H\left(P_{e}^{(E)}\right)\right)\right]^{+}
\\&\geq&\left[P_{t}(1-P_{t})\left(1-H\left(1 - \phi\left(\sqrt \frac{\rho_{\textrm{min}}}{{d_{AB}}^{\alpha}}\right)\right)\right)
-D_{A}\left(1-H\left(\frac{P_{e}^{(EA)}}{D_A}\right)\right)\right]^{+}
\end{eqnarray*}
And the same result applies to the secrecy rate of Bob's message to Eve by using,
\begin{eqnarray*}
\alpha_{E} = 1 - D_{B}
\end{eqnarray*}
\begin{eqnarray*}
P_{e}^{(E)} = \frac{P_{e}^{(EB)}}{1-\alpha_{E}}
\end{eqnarray*}
Finally, in order to achieve symmetric secure communication, we set both rates to the minimum of achievable secrecy rates for the two nodes. We follow the same min-max strategy as given in the proof of Theorem~\ref{thm:tdm}
to obtain the lower bound on $R_s$.


\section{Proof of Corollary~\ref{cor:maxinteractive}}
\label{prf:maxinteractive}

By ignoring the noise effect at Eve, symbols where both transmitters are active will be correctly decoded at Eve as the symbol with the highest transmit power. Hence, with no prior information regarding the source of any transmitted symbol, Eve will not erase any symbol, i.e. $E_2 \in \left\{(A,B^{c}),(A^{c},B)\right\}$. Moreover, $P_{E_1 \rightarrow E_2}=0.5$ for all six possible combinations of $E_1$ and $E_2$, $P_{e|(A,B) \rightarrow E_2}$=0.25 for the two possible values of $E_2$. By applying those values, we get:
\begin{equation*}
R_{EA}=R_{EB}= P_{t}(1-0.5P_{t})(1-H(0.25))
\end{equation*}
These values are achieved by employing a symmetric \emph{real-time} detector at Eve, i.e. $R_{EA}=R_{EB}$, and each symbol has to be decoded as being transmitted either by Alice or Bob. However, Eve may choose to maximize the value $\max(R_{EA},R_{EB})$ by either maximizing only one of those values at the cost of minimizing the other, or by allowing its decoder to \emph{match} the same symbol to different sources, e.g., let $P_{E_1 \rightarrow (A,B^{c})}= 1$ for all possible values of $E_1$, then,
\begin{eqnarray*}
D_{A} = 1-(1-P_{t})^{2},
\end{eqnarray*}
note that $P_{e|(E_1 \rightarrow (A,B^{c}))}$ remains the same. By applying the resulting probabilities in the last example, we get the rate in (\ref{eqn:upperbound}). It is obvious that by symmetry, having $E_2=(A^{c},B)$ for all symbols results in the same rate.


\bibliographystyle{IEEEtran}


\begin{figure}[htb]
\centering
\includegraphics[width=0.48\columnwidth]{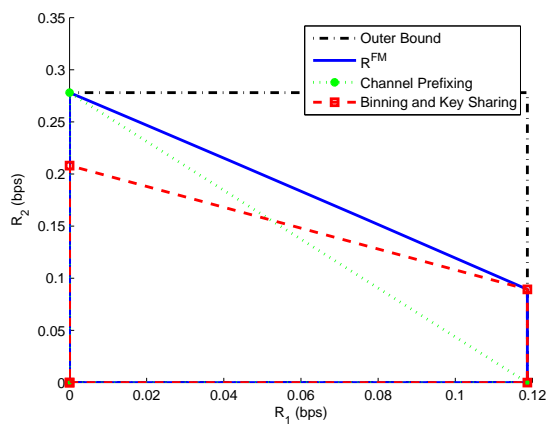}
\caption{Boundaries of achievable rate regions for the modulo-$2$
channel, when $\epsilon_1=0.2$, $\epsilon_2=0.3$, $\epsilon_e=0.25$,
and $\mu_1=\mu_2=0.5$. The outer bound is the capacity of
the two-way channel without the secrecy constraints.}
\label{fig:modulo}
\end{figure}

\begin{figure}[htb]
\centering
\includegraphics[width=0.48\columnwidth]{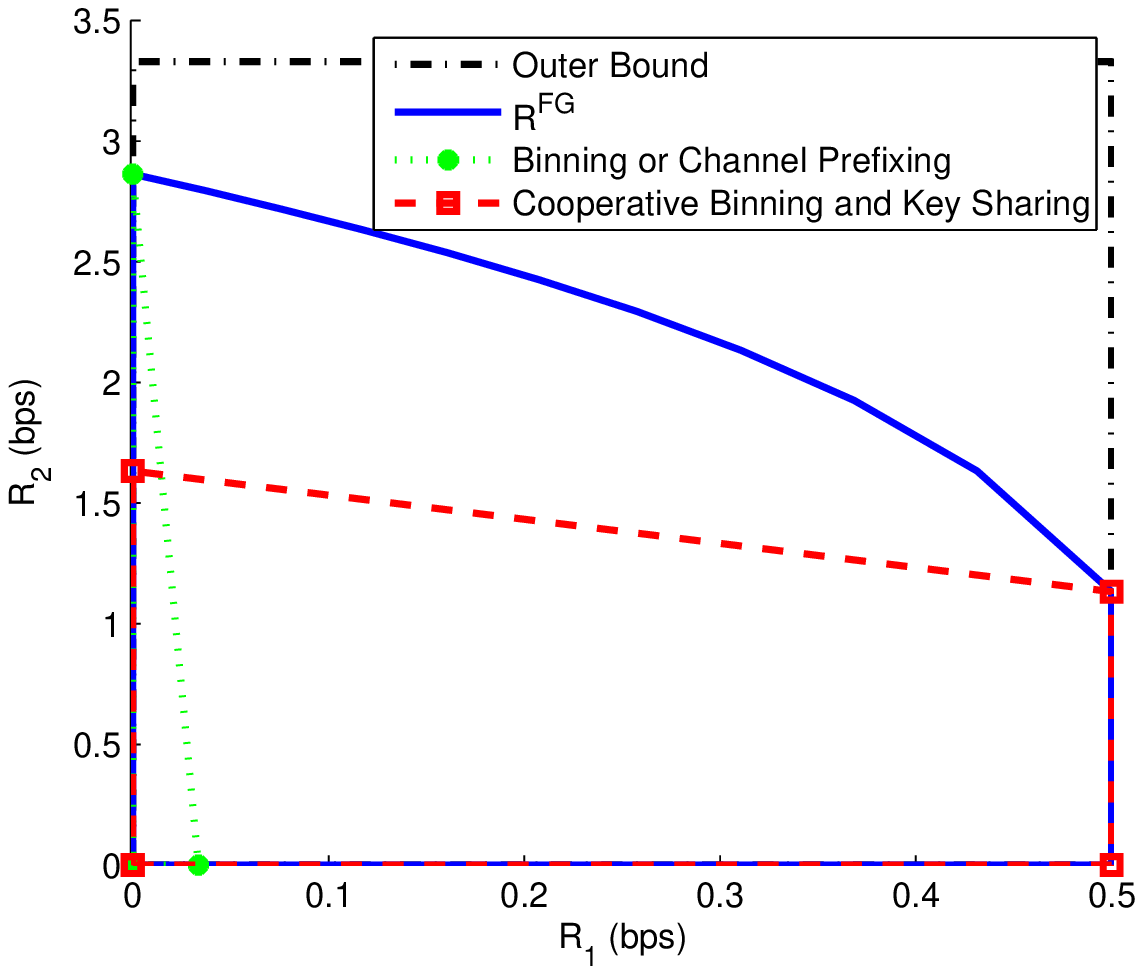}
\caption{Boundaries of achievable rate regions for the Gaussian
channel, when $g_{11}=g_{22}=1$, $g_{e1}=10$, $g_{e2}=0.1$, and
$\rho_1=1$, $\rho_2=100$. The outer bound is the capacity of
the two-way channel without the secrecy constraints. }
\label{fig:gaussian}
\end{figure}

\begin{figure}[t]
\centering
\includegraphics[width=0.48\columnwidth]{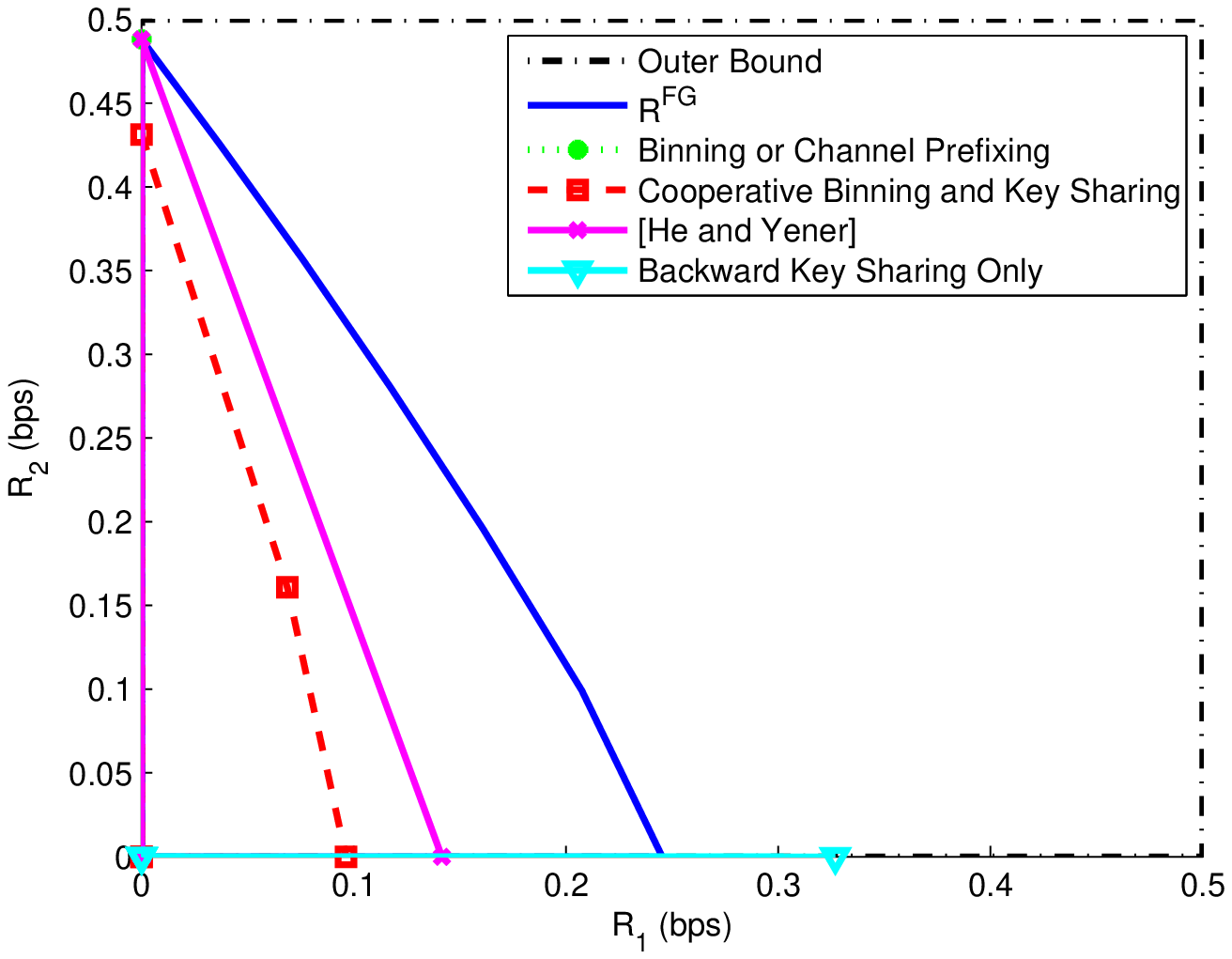}
\caption{Boundaries of achievable rate regions for the Gaussian
channel, when $g_{11}=g_{22}=1$, $g_{e1}=5$, $g_{e2}=0.1$, and
$\rho_1=\rho_2=1$. The outer bound is the capacity of
the two-way channel without the secrecy constraints.}
\label{fig:gaussian2}
\end{figure}

\begin{figure}[htb]
\centering
\includegraphics[width=0.24\columnwidth]{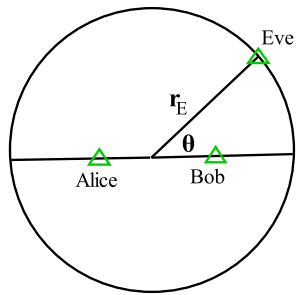}
\caption{Near field wireless communications scenario. Eve is assumed to be located outside a circle of radius $r_E$ whose center lies at the mid-point between Alice and Bob.}
\label{fig:BANmodel}
\end{figure}

\begin{figure}[htb]
\centering
\includegraphics[width=0.48\columnwidth]{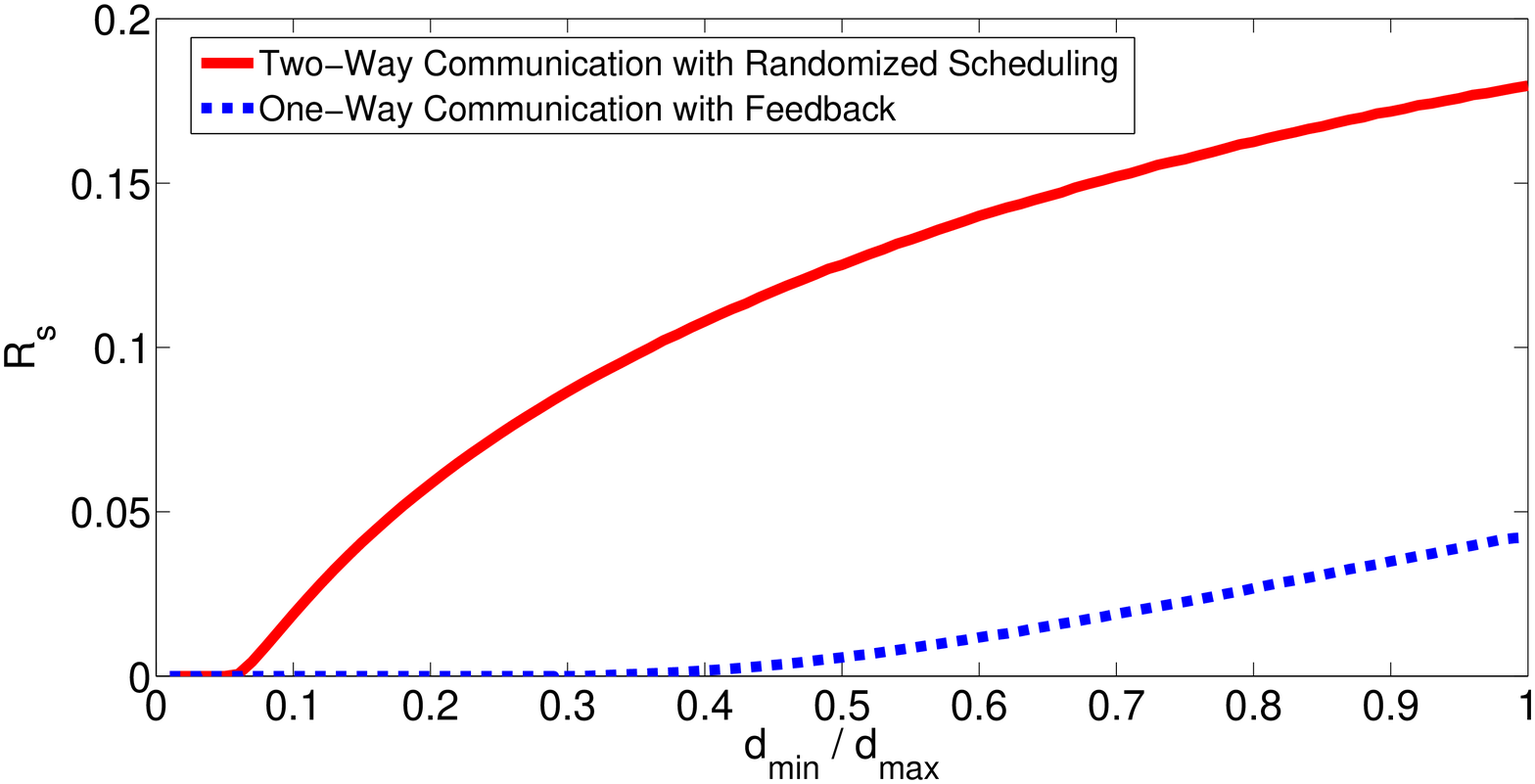}
\caption{Maximum achievable secrecy rate for different distance ratios between Eve and each of the two communicating nodes.}
\label{fig:numerical}
\end{figure}

\begin{figure}[htb]
\centering
\includegraphics[width=0.48\columnwidth]{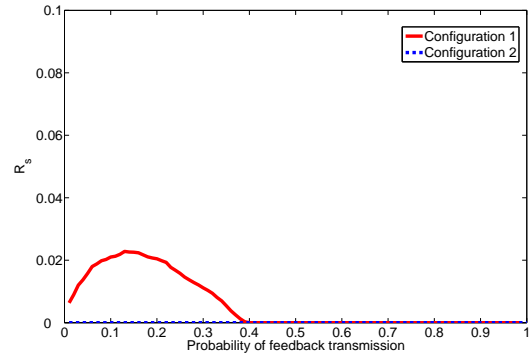}
\caption{$\beta$ vs. $R_s$ in different configurations for the one way TDM scheme, $R_s$ = 0.5$[R_{M}-R_{E}]^{+}$. We consider the case when Alice is the transmitter and Bob is the legitimate receiver.}
\label{fig:tdm}
\end{figure}

\begin{figure}[htb]
\centering
\includegraphics[width=0.48\columnwidth]{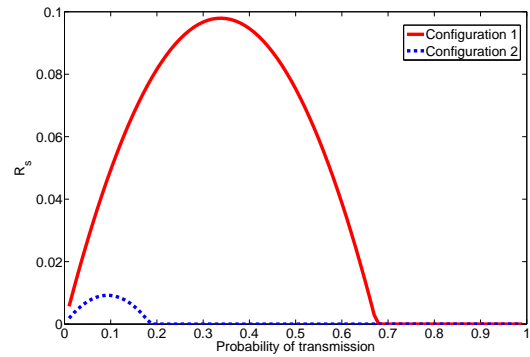}
\caption{$P_{t}$ vs. $R_s$ in different configurations for the randomized scheduling communication scheme, $R_s$ = [$R_{M}$-$\max$($R_{EA}$,$R_{EB}$)]$^{+}$.}
\label{fig:interactive}
\end{figure}

\end{document}